\newcommand{\diag}{\textrm{diag}}
\newtheorem{theorem}{Theorem}
\newtheorem{definition}[theorem]{Definition}
\newtheorem{proposition}[theorem]{Proposition}
\newtheorem{corollary}[theorem]{Corollary}
\newtheorem{lemma}[theorem]{Lemma}
\newtheorem{remark}[theorem]{Remark}
\newtheorem{conjecture}[theorem]{Conjecture}
\title{Structure Theorem of Square Complex Orthogonal Design}
\author{Yuan Li}
\date{}
\begin{document}

\maketitle

\begin{abstract}
Square COD (complex orthogonal design) with size $[n, n, k]$ is an $n \times n$ matrix $\mathcal{O}_z$, where each entry is a complex linear combination of
$z_i$ and their conjugations $z_i^*$, $i=1,\ldots, k$,  such that $\mathcal{O}_z^H \mathcal{O}_z = (|z_1|^2 + \ldots + |z_k|^2)I_n$. Closely following the work of
Hottinen and Tirkkonen, which proved an upper bound of $k/n$ by making a crucial observation between square COD and group representation, we prove the structure theorem of square COD.

\textbf{Keywords:} square complex orthogonal design, complex orthogonal design, space-time block codes, representation theory
\end{abstract}

\section{Introduction}
COD (complex orthogonal design) $\mathcal{O}_z$ with size $[p, n, k]$ is a $p \times n$ matrix where each entry
is a complex linear combination of $z_i, z^*_i$, $i=1,\ldots, k$, such that $\mathcal{O}_z^H \mathcal{O}_z = (|z_1|^2+\ldots+|z_k|^2)I_n$.
It is called a square COD if $p = n$. The general question is, for which $p, n, k$, there exists a $[p, n, k]$ complex orthogonal design, and what are the possible
structures?

For square COD, that is $p = n$, Hottinen and Tirkkonen \cite{TH02} proved an upper bound $\frac{\lceil \log_2 n \rceil + 1}{2^{\lceil \log_2 n \rceil}}$ of $k/n$ by making a crucial connection between square COD and group representation.
In fact, closely following their work, the structure of square COD can be clarified, which is what we did in this paper.
In \cite{Lia03}, Liang observed that $[n, n, k]$ square COD exists if and only if $2^{k-1} | n$. Because $[n, n, k]$ square COD induces a family of $2k$ matrices in $GL_n(\mathbb{C})$ such
that any not-all-zero complex linear combinations is nonsingular, and it is known that the size of such family of matrices is bounded
by $2m+2$, where $n = 2^m n_0$, $n_0$ odd, which is a highly nontrivial result proved by Frank Adams, Lax and Phillips in \cite{Adams62}, \cite{ALP65}, \cite{ALP65_2}.

For nonsquare COD, little is known except some restricted cases. In \cite{WX03}, Wang and Xia proved that $k/p \le 3/4$ when $n$ is greater than $2$.
In \cite{Lia03}, when linear combination is not allowed, i.e., each entry is $\pm z_i, \pm z^*_i$ or $0$, Liang proved $k/p \le (m+1)/(2m)$ for $n = 2m$ or $2m-1$, which is tight. In \cite{AKP07}, \cite{AKM10},
S. S. Adams, Karst, Murugan, and Pollack proved tight lower bound of $p$ when $k/p$ reaches the maximal for CODs without linear combinations.
By putting a further restriction that submatrices $\begin{pmatrix} \pm z_j & 0
\\ \ 0 & \pm z^*_j
\end{pmatrix}$ and $\begin{pmatrix} \pm z^*_j & 0 \\ \ 0 & \pm z_j
\end{pmatrix}$ are forbidden, Kan and Li determined all possible $[p, n, k]$ and the structures \cite{LK12}.

A lot of investigation of COD is motivated by Space-time Block Codes (STBC) in wireless communication systems with multiple transmit and receive antennas. Since the pioneering
work by Alamouti \cite{Ala98} in 1998, and the work by Tarokh et al. \cite{TJC99}, \cite{TJC00}, complex orthogonal designs have become an effective technique for the design of STBC. For STBC, parameter $k/p$ is the encoding rate, and $p$ is the decoding delay, that is why the upper bound of $k/p$ and the lower bound of $p$ attracts a lot of attention.
 Because of its applications in space-time block codes, quite a lot of constructions have been proposed \cite{ADK11}, \cite{DR12}, \cite{KS05},  \cite{Lia03}, \cite{LK12}, \cite{LFX05}, \cite{SXL04}, \cite{SSW05}, \cite{TWS09}.

In this paper, we prove the structure theorem of square COD, which roughly says every $[n, n, k]$ square COD
is equivalent to some simple canonical form. We emphasize that although the structure theorem is a satisfying result
describes all possible square CODs, we did little to get it. Nearly all ingredients for the proof are already there, including classical result on representation
of finite groups, and the connection between square COD and group representation in \cite{TH02}.

\section{Preliminaries of Group Representation}

In this section, we review some basic definitions and results on representation of finite groups, which will be used in the following sequel.
The missing proofs can be found in group representation textbooks, for example \cite{Serre}.

A representation $\rho$ of a group $\mathcal{G}$ of dimension $n$ is a homomorphism from $\mathcal{G}$ to $GL_n(\mathbb{C})$, that is,
$$
\rho(g_1) \rho(g_2) = \rho(g_1 g_2)
$$
for any $g_1, g_2 \in \mathcal{G}$. Call $\rho$ is an unitary representation if $\rho$ is a map from $\mathcal{G}$ to $U_n(\mathcal{C})$,
where $U_n(\mathbb{C})$ denotes the group of $n \times n$ unitary matrices.

Two representations $\rho, \pi : \mathcal{G} \to GL_n(\mathbb{C})$ are equivalent (equal) if there exists $T \in GL_n(\mathbb{C})$ such that
$\rho = T \pi T^{-1}$. They are unitarily equivalent if there exists $T \in U_n(\mathbb{C})$ such that $\rho = T \pi T^{-1}$.

Given representation $\rho: \mathcal{G} \to GL_n(\mathbb{C})$, a subspace $V$ of $\mathbb{C}^n$ is called an invariant subspace if
$\rho(g) v \in V$ for any $g \in \mathcal{G}$, any $v \in V$. Representation $\rho$ is called an irreducible representation if $\rho$ does
not have nontrivial invariant subspace (except $0$ and $\mathbb{C}^n$).

For finite groups, any representation is (equivalent to) a direct sum of irreducible ones (unique up to ordering). If finite group $\mathcal{G}$ is explicitly given, it's usually not
difficult to find all irreducible representations. There are two nice counting formulas, which are useful in classifying all irreducible representations: the number of all irreducible
representations equals the number of conjugacy classes; the sum of squares of the dimension of all  irreducible representations
equals the size of the group.

For representation $\rho : \mathcal{G} \to GL_n(\mathbb{C})$, the character $\chi : \mathcal{G} \to \mathbb{C}$
is defined by the trace of the matrix, i.e., $\chi(g) = \mathrm{Tr}(\rho(g))$ for $g \in \mathcal{G}$. The characters of all irreducible
representations form a basis of class functions on $\mathcal{G}$, where a function from $\mathcal{G}$ to $\mathbb{C}$ is a class function if
it takes the same value on every conjugacy class. As a consequence, two representations are equal if and only if their characters are the same.

Let $\rho : \mathcal{G} \to GL_n(\mathbb{C})$ be an irreducible representation of group $\mathcal{G}$. If $T \in GL_n(\mathbb{C})$ intertwines (commutes)
with $\rho$, that is, $T \rho(g) = \rho(g) T$, for all $g \in \mathcal{G}$, Schur's lemma says, $T = \lambda I$ for some $\lambda \in \mathbb{C}$.
\section{Structure Theorem of square COD}

\begin{definition} \cite{TJC99}
 Complex Orthogonal Design (COD) with size $[p, n, k]$ is a $p \times n$ matrix $\mathcal{O}_z$, where each entry is a complex linear combination of
$z_i, z^*_i$, $i=1,\ldots, k$, such that
\begin{equation}
\label{def:COD}
\mathcal{O}_z^H \mathcal{O}_z = (|z_1|^2 + \ldots + |z_k|^2)I_n.
\end{equation}
If $p = n$, it is called a square COD.
\end{definition}
\begin{remark}
In the definition, $z_1, \ldots, z_k$ are indeterminates over $\mathbb{C}$. There are two ways to think of it: $\mathcal{O}_z$ is an unitary matrix
for every $z_1, \ldots, z_k \in \mathbb{C}$ with $|z_1|^2 + \ldots + |z_k|^2 = 1$; or $z_1, \ldots, z_k$ are ``formal'' complex variables such that
\eqref{def:COD} is satisfied.
\end{remark}

Why do we need conjugation $z^*_i$ in the definition of square COD? What if each entry is just complex linear combination of $z_i, i = 1, \ldots, k$? It's not difficult
to see under this definition, there does not exists square COD with $k$ is greater than $1$, and we leave the verification to interested readers.

\vspace{0.2cm}
Assume $\mathcal{O}_z$ is an $[n, n, k]$ square COD, and $U, V \in U_n(\mathbb{C})$, then $U \mathcal{O}_z V$ is also an $[n, n, k]$ square COD, because
\begin{eqnarray*}
(U\mathcal{O}_zV)^H (U\mathcal{O}V) & = & V^H \mathcal{O}_z^H U^H U\mathcal{O}_zV \\
& = & V^H \mathcal{O}_z^H \mathcal{O}_z V \\
& = & V^H (|z_1|^2 + \ldots + |z_k|^2)I_n V\\
& = & (|z_1|^2 + \ldots + |z_k|^2) I_n.
\end{eqnarray*}
Say square CODs $\mathcal{O}_z$ and $U \mathcal{O}_z V$ are equivalent, which defines an equivalence relation among square CODs.

Before stating our main result, we need to define the canonical square CODs.

\begin{definition}
\label{def:can}
Let
\begin{equation}
\mathcal{C}_1 = \begin{pmatrix}
z_1
\end{pmatrix}
\end{equation}
For $k > 1$, let
\begin{equation}
\mathcal{C}_k = \begin{pmatrix}
\mathcal{C}_{k-1} & z_k I_{2^{k-2}} \\
- z^*_k I_{2^{k-2}} & \mathcal{C}^H_{k-1}\\
\end{pmatrix}.
\end{equation}
Define $\mathcal{C}^-_k = \mathcal{C}_k(z_1, \ldots, z_{k-1}, z_k^*)$ be the design by replacing $z_k$ by $z^*_k$ in $\mathcal{C}_k$.
\end{definition}
\begin{remark} Since $z_1, \ldots, z_k$ are totally symmetric in $\mathcal{C}_k$, it doesn't matter which $z_i$ to conjugate in $\mathcal{C}_k^-$.
In other words, we can define $$\mathcal{C}_k^- = \mathcal{C}_k(z_1, \ldots, z_{i-1}, z_i^*, z_{i+1}, \ldots, z_k)$$ for any $i \in [k]$.
\end{remark}

Let's verify $\mathcal{C}_k$ and $\mathcal{C}^-_k$ are square CODs by induction. Since $|z_k|^2 =|z^*_k|^2$, it suffices to prove
$\mathcal{C}_k$ is square COD. When $k=1$, it's obvious. Assuming $\mathcal{C}_{k-1}$ is COD, 
\begin{eqnarray*}
\mathcal{C}^H_k \mathcal{C}_k & = & \begin{pmatrix}
\mathcal{C}_{k-1} & z_k I_{2^{k-2}} \\
- z^*_k I_{2^{k-2}} & \mathcal{C}^H_{k-1}\\
\end{pmatrix}^H \begin{pmatrix}
\mathcal{C}_{k-1} & z_k I_{2^{k-2}} \\
- z^*_k I_{2^{k-2}} & \mathcal{C}^H_{k-1}\\
\end{pmatrix} \\
& = & \begin{pmatrix}
\mathcal{C}^H_{k-1} \mathcal{C}_{k-1} + |z_k|^2 I_{2^{k-2}} & 0  \\
0 & \mathcal{C}^H_{k-1} \mathcal{C}_{k-1} + |z_k|^2 I_{2^{k-2}}\\
\end{pmatrix} \\
& = & \begin{pmatrix}
(|z_1|+\ldots+ |z_k|^2) I_{2^{k-2}} & 0  \\
0 & (|z_1|^2 +\ldots+ |z_k|^2) I_{2^{k-2}}\\
\end{pmatrix} \\
& = & (|z_1|^2 +\ldots+ |z_k|^2) I_{2^{k-1}},
\end{eqnarray*}
where the last second step is by induction hypothesis that $\mathcal{C}^H_{k-1} \mathcal{C}_{k-1} =(|z_1|^2 +\ldots+|z_{k-1}|^2)I_{2^{k-2}}$.

Now we are ready to state our main result, which is the structure theorem of square COD.

\begin{theorem}
\label{thm:main}
 Square COD $\mathcal{O}_z$ with size $[n, n, k]$ exists if and only if
$$
2^{k-1} | n,
$$
and there exist $U, V \in U_n(\mathbb{C})$, and unique $n_1, n_2 \in \mathbb{N}$ with $n_1 + n_2 = n/2^{k-1}$ such that
\begin{eqnarray*}
\mathcal{O}_z & = & U  \diag(\underbrace{\mathcal{C}_k, \ldots, \mathcal{C}_k}_{n_1}, \underbrace{\mathcal{C}^{-}_k, \ldots, \mathcal{C}^{-}_k}_{n_2}) V \\
& = & U \left((I_{n_1} \otimes \mathcal{C}_k) \oplus (I_{n_2} \otimes \mathcal{C}^-_k)\right) V.
\end{eqnarray*}
\end{theorem}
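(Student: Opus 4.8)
The plan is to linearize the design over $\mathbb{R}$, extract from the orthogonality condition a family of anticommuting unitary matrices, and recognize them as a unitary representation of a finite $2$-group whose decomposition is forced by the value $-I$ on a central involution.

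First I would separate real and imaginary parts: writing $z_i = x_i + \sqrt{-1}\,y_i$ and collecting the $2k$ real variables as $t_1,\ldots,t_{2k}$, every entry of $\mathcal{O}_z$ is a real-linear form, so $\mathcal{O}_z = \sum_{j=1}^{2k} t_j M_j$ with fixed $M_j \in GL_n(\mathbb{C})$. Expanding $\mathcal{O}_z^H\mathcal{O}_z = (\sum_j t_j^2)\,I_n$ and comparing the coefficients of $t_j t_l$ yields $M_j^H M_j = I_n$ and $M_j^H M_l + M_l^H M_j = 0$ for $j\ne l$; in particular each $M_j$ is unitary. Using the equivalence $\mathcal{O}_z \mapsto U\mathcal{O}_z V$ with $U = M_1^H$, I may assume $M_1 = I_n$. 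Setting $J_a = M_1^H M_{a+1}$ for $a=1,\ldots,2k-1$, the same relations collapse to $J_a^H = -J_a$, $J_a^2 = -I_n$, and $J_a J_b = -J_b J_a$ for $a\ne b$: a system of $r:=2k-1$ anticommuting unitary complex structures.

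Next I would pass to group representations. The set $\{\pm J_1^{a_1}\cdots J_r^{a_r}\}$ is a finite group $\mathcal{G}$ of order $2^{r+1}=2^{2k}$ carrying a central involution $z$ realized as $-I_n$, and sending the abstract generators to the $J_a$ defines a unitary representation $\rho$ of $\mathcal{G}$ with $\rho(z)=-I_n$. The commutator subgroup is $\langle z\rangle$ (order $2$), so $\mathcal{G}$ has $2^{2k-1}$ linear characters, all trivial on $z$; and since $r$ is odd the product $\omega = J_1\cdots J_r$ is central, so $Z(\mathcal{G})=\{1,z,\omega,z\omega\}$ has order $4$ with nondegenerate commutator pairing on $\mathcal{G}/Z(\mathcal{G})$. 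By Schur's lemma every irreducible sends $z$ to $\pm I$: those with $+I$ factor through the elementary abelian quotient $\mathcal{G}/\langle z\rangle$ and are the linear characters, while those with $-I$ are governed by the nondegenerate pairing and thus have dimension $\sqrt{[\mathcal{G}:Z(\mathcal{G})]}=\sqrt{2^{2k}/4}=2^{k-1}$; the sum-of-squares formula $2^{2k}-2^{2k-1}=2\cdot(2^{k-1})^2$ then forces exactly two such irreducibles $\rho_+,\rho_-$, distinguished (hence inequivalent) by the scalar that Schur's lemma assigns to $\omega$.

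Finally I would decompose and translate back. Since $\rho(z)=-I_n\ne I_n$, the representation $\rho$ contains none of the linear characters, so it is unitarily equivalent to $(I_{n_1}\otimes\rho_+)\oplus(I_{n_2}\otimes\rho_-)$; comparing dimensions gives $n = 2^{k-1}(n_1+n_2)$, i.e.\ $2^{k-1}\mid n$, while the canonical designs exhibit existence whenever this holds. Linearizing and normalizing the recursively defined $\mathcal{C}_k$ exactly as above shows its structures generate $\rho_+$ and those of $\mathcal{C}_k^-$ generate $\rho_-$; as these are the only two faithful irreducibles, the block-diagonalizing unitary composed with the normalization $M_1^H$ assembles into the desired $U,V$. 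For uniqueness, tracing the definitions shows the equivalence $\mathcal{O}_z\mapsto U\mathcal{O}_z V$ induces unitary conjugation of $\rho$, which preserves isotypic multiplicities; since $\rho_+\not\cong\rho_-$, the ordered pair $(n_1,n_2)$ is intrinsically determined.

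I expect the main obstacle to be the representation-theoretic core: pinning down that $\mathcal{G}$ has precisely two irreducibles with $z\mapsto -I$, each of dimension $2^{k-1}$, and identifying them unambiguously with $\mathcal{C}_k$ and $\mathcal{C}_k^-$. The most delicate point is the uniqueness of the \emph{ordered} pair $(n_1,n_2)$: it requires both $\rho_+\not\cong\rho_-$ and the verification that, after fixing the linearization and normalization convention, the scalar by which $\omega$ acts is an invariant of the COD-equivalence class, so that the number of blocks equivalent to $\mathcal{C}_k$ and the number equivalent to $\mathcal{C}_k^-$ are each separately determined.
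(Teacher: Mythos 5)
Your proposal is correct and follows essentially the same route as the paper: linearize to $2k$ anticommuting unitary matrices, normalize by $E_0^H$ (your $M_1^H$), decompose the resulting unitary representation of the order-$2^{2k}$ Clifford-type group $\mathcal{G}_{2k-1}$, identify $\mathcal{C}_k$ and $\mathcal{C}_k^-$ with its only two nondegenerate $2^{k-1}$-dimensional irreducibles, and obtain uniqueness of $(n_1,n_2)$ from the invariance of isotypic multiplicities under the equivalence. The only cosmetic differences are that you derive the dimension count from the center and the commutator pairing (with the central element $\omega = J_1\cdots J_{2k-1}$ distinguishing the two irreducibles, which is the same mechanism as the paper's Lemma \ref{lem:irr_rep}) where the paper counts conjugacy classes as in Lemma \ref{lem:groupstr}, and that you assert rather than prove that equivalent unitary representations are unitarily equivalent, a fact the paper establishes as Lemma \ref{lem:unitary_equ} via polar decomposition.
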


At first glance, you may doubt the correctness of the above theorem. Let $\mathcal{O}_z$ be some square COD by arbitrarily conjugating
some $z_i$ from the canonical form $\mathcal{C}_k$. Is $\mathcal{O}_z$ still equivalent to $\mathcal{C}_k$ or $\mathcal{C}_k^-$? Of course, the answer is
yes. Because $\mathcal{C}_k$ has high ``symmetry'' by its construction: $2^{k-1}$ rows contain $2^{k-1}$ different conjugation patterns. By ``conjugation pattern'', we mean a set $S \subseteq [k]$, such that $\pm z_i^{*}$ instead of $\pm z_i$ appears in that row if and only if $i \in S$.

\vspace{0.2cm}
Before starting our proof, let's sketch the overall idea. First, the existence of $[n, n, k]$ square COD implies the existence of a set of matrices $E_0, \ldots, E_{2k-1} \in U_n(\mathbb{C})$ such that, for any $i \not= j$,
\begin{equation}
\label{equ:Eij}
E^H_i E_j + E^H_j E_i = 0.
\end{equation}
Following a normalization step made in \cite{TH02}, define $G_i = E^H_0 E_i$, which is also unitary, then \eqref{equ:Eij} implies
\begin{equation}
G_iG_j = -G_j G_i
\end{equation}
for all $i \not= j$, and $G^2_i =-1$ for all $i \in \{2, \ldots, 2k\}$.  If we artificially define a group generated by $g_2, \ldots, g_{2k}$ as well as
$1, -1$,
satisfying relations $g_i^2 = -1$ and $g_i g_j = -g_j g_i$, then matrices $G_2, \ldots, G_{2k}$ induce a linear representation of the group in
the natural way.
(In fact, this is the defining relation of generators of Clifford algebra, which has been well studied in mathematics.) By studying the structure of
the group, it turns out that this group has only two nondegenerate irreducible representations, which are exactly induced by $\mathcal{C}^\pm_k$.
Since any linear representation of finite group
can be decomposed into a direct sum of irreducible ones, we obtain our structure theorem.

\vspace{0.2cm}
Let's start our proof formally. Given an $[n, n, k]$ square COD $\mathcal{O}_z$, writing $z_i = x_i + \sqrt{-1}y_i$, $x_i, y_i \in \mathbb{R}$, expand $\mathcal{O}_z$ as follows.
\begin{eqnarray*}
\mathcal{O}_z & = & \sum_{i=1}^k \left( z_i A_i + z_i^* B_i \right)\\
& = & \sum_{i=1}^k \left( (x_i + \sqrt{-1}y_i)A_i + (x_i - \sqrt{-1}y_i)B_i \right)\\
& = & \sum_{i=1}^k x_i (A_i + B_i) + \sum_{i=1}^k y_i \sqrt{-1} (A_i - B_i).
\end{eqnarray*}
For convenience, let $E_i = A_i + B_i$, $E_{i+k} = \sqrt{-1} (A_i - B_i)$ and $x_{k+i} = y_i$. Then,
\begin{equation}
\label{equ:UEi}
\mathcal{O}_z = \sum_{i=1}^{2k} x_i E_i,
\end{equation}
and
\begin{eqnarray*}
\mathcal{O}^H_z & = & \sum_{i = 1}^k (z_i^* A_i^H + z_i B_i^H) \\
& = & \sum_{i = 1}^k \left((x_i - \sqrt{-1} y_i) A_i^H + (x_i + \sqrt{-1} y_i) B_i^H \right)\\
& = & \sum_{i=1}^k x_i (A_i + B_i)^H - \sum_{i=1}^k y_i \sqrt{-1} (A_i - B_i)^H \\
& = & \sum_{i = 1}^{2k} x_i E_i^H.
\end{eqnarray*}
By taking $x_i = 1$ and all others $0$, condition $\mathcal{O}_z^H \mathcal{O}_z = (|z_1|^2 + \ldots + |z_k|^2)I_n$ implies
\begin{equation}
\label{equ:orth}
E^H_i E_i = I_n.
\end{equation}
By taking $x_i = x_j = 1$ and all others $0$, we have
\begin{equation}
\label{equ:van}
 E_i^H E_j + E_j^H E_i = 0.
\end{equation}
On the other hand, if \eqref{equ:orth} and \eqref{equ:van} are satisfied, it's easy to check $\mathcal{O}_z^H \mathcal{O}_z =
 (\sum_{i=1}^{2k} x_i^2) I_n = (\sum_{i=1}^k |z_i|^2) I_n$,
which means \eqref{equ:orth} and \eqref{equ:van} are both necessary and sufficient.
Now, we have proved the following proposition, which is a folklore result. We are not sure who prove it first, at least it appears
in \cite{TH02}. In the pioneering work \cite{TJC99} which first defines COD motivated by Space-time Block Codes, it seems that they don't get this concise form, which is crucial for the following work.

\begin{proposition}
\label{prop:basic}
Square COD with size $[n, n, k]$ exists if and only if there exists matrices $E_1, \ldots, E_{2k} \in U_n(\mathbb{C})$, such that
$$
 E_i^H E_j + E_j^H E_i = 0
$$
for all $1 \le i \not= j \le 2k$.
\end{proposition}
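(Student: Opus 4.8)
The plan is to convert the defining complex-bilinear identity into a real-quadratic identity in $2k$ real variables, and then read off the matrix conditions by comparing coefficients. First I would write each complex indeterminate as $z_i = x_i + \sqrt{-1}\,y_i$ with $x_i, y_i$ real, and record that, by the definition of a COD, every entry of $\mathcal{O}_z$ is a complex-linear combination of the $z_i$ and the $z_i^*$; hence there are constant matrices $A_i, B_i \in M_n(\mathbb{C})$ with $\mathcal{O}_z = \sum_{i=1}^k (z_i A_i + z_i^* B_i)$. Substituting the real and imaginary parts collapses this to a real-linear expression $\mathcal{O}_z = \sum_{i=1}^{2k} x_i E_i$, where $E_i := A_i + B_i$ and $E_{i+k} := \sqrt{-1}(A_i - B_i)$ and we relabel $x_{k+i} := y_i$. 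The point of this step is that conjugate-transposition acts on entries by sending $z_i \mapsto z_i^*$, so it is precisely the real variables $x_1, \ldots, x_{2k}$ (and not the complex $z_i$) that transform linearly under $\mathcal{O}_z \mapsto \mathcal{O}_z^H$; the same substitution then gives $\mathcal{O}_z^H = \sum_{i=1}^{2k} x_i E_i^H$.

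Next I would expand the product. Since $|z_1|^2 + \cdots + |z_k|^2 = x_1^2 + \cdots + x_{2k}^2$, the COD condition becomes the identity
\[
\sum_{i,j=1}^{2k} x_i x_j\, E_i^H E_j = \Big(\sum_{i=1}^{2k} x_i^2\Big) I_n,
\]
which must hold for all real $x_1, \ldots, x_{2k}$, equivalently as an identity of real quadratic polynomials with matrix coefficients. Comparing coefficients --- concretely, by setting one $x_i = 1$ and the rest to $0$, and then by setting a pair $x_i = x_j = 1$ and the rest to $0$ --- yields $E_i^H E_i = I_n$ for every $i$ and $E_i^H E_j + E_j^H E_i = 0$ for every $i \neq j$. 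The first family says exactly that each square matrix $E_i$ is unitary, and the second is the claimed anticommutation relation.

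For the converse I would run the computation backwards: assuming $E_i \in U_n(\mathbb{C})$ and $E_i^H E_j + E_j^H E_i = 0$ for $i \neq j$, the off-diagonal terms of the expansion cancel in pairs and the diagonal terms collapse to $\sum_i x_i^2 I_n$, so any $\mathcal{O}_z = \sum_i x_i E_i$ assembled from such matrices satisfies $\mathcal{O}_z^H \mathcal{O}_z = (\sum_i |z_i|^2) I_n$ and is therefore an $[n,n,k]$ square COD.

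The only genuinely delicate point is the legitimacy of coefficient-matching: one must be certain the defining equation is required to hold identically in the $z_i$ (equivalently, for all real $x_i$), so that a polynomial identity rather than a single numerical evaluation is available. Granting the ``formal variables'' reading stated in the Remark following the definition, this is immediate, and the remainder is mechanical bookkeeping of the change of variables $z_i = x_i + \sqrt{-1}\,y_i$; I expect no deeper obstacle.
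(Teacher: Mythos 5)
Your proposal is correct and follows essentially the same route as the paper: the identical change of variables $z_i = x_i + \sqrt{-1}\,y_i$ with $E_i = A_i + B_i$, $E_{i+k} = \sqrt{-1}(A_i - B_i)$, the same coefficient extraction by evaluating at $x_i = 1$ (giving unitarity) and at $x_i = x_j = 1$ (giving $E_i^H E_j + E_j^H E_i = 0$), and the same reverse expansion for sufficiency. Your closing remark about the formal-variable reading matches the paper's Remark and correctly identifies why coefficient-matching is legitimate.
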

\begin{remark} The above proposition also holds for non-square case, that is, COD with size $[p, n, k]$ exists if and only if there exists
matrices $E_1, \ldots, E_{2k} \in M_{p \times n}(\mathbb{C})$ such that
$$
E_i^H E_i = I_n
$$
for all $i$, and
$$
E_i^H E_j + E_j^H E_i = 0
$$
for all $i \not= j$.
\end{remark}

For convenience of description, let's left shift the indices of $E_i$ by 1.
Define $G_i = E_0^H E_i$. Then $G_i^H G_i = (E_0^H E_i)^H (E_0^H E_i) = E_i^H E_0 E_0^H E_i = I_n$, which means $G_i$ is also unitary. Further,
$G_i$ is skew-Hermitian (or equivalently, $G_i^2 = -I_n$), whence
$$
G_i^H = ((E_0)^H E_i)^H = E_i^H E_0 = -E^H_0 E_i = -G_i,
$$
where the last second step is from $E_0^H E_i + E_i^H E_0 = 0$. Further,
we have
\begin{eqnarray*}
&  & G_i G_j =  -G^H_i G_j\\
& = & -(E_0^H E_i)^H (E_0^H E_j) =  -E^H_i E_0 E_0^H E_j \\
& = & -E^H_i E_j = E_j^H E_i = -(-E_j^H E_i) \\
& = & -G_j G_i,
\end{eqnarray*}
which means $G_i$ and $G_j$ are anti-commuting.

Now, let's artificially define a group $\mathcal{G}_{2k-1}$ generated by $g_1, \ldots, g_{2k-1}, -1$ satisfying $
g_i^2 = -1
$
and
$
g_i g_j = -g_j g_i.
$
Notice that $1$ and $-1$ denote two distinct elements in the group, where $1$ is the identity, and
$g_i, -g_i$ are two different elements satisfying $-g_i = (-1) g_i$. Formally, the group consists of the following
elements
$$
\{ \pm \prod_{i \in S} g_i : S \subseteq \{1,\ldots, 2k-1\}\}.
$$
Thus, the size of the group $\mathcal{G}_{2k-1}$ is $2^{2k}$.

Lemma \ref{lem:irr_rep} and Lemma \ref{lem:groupstr} are about the irreducible representations of the group
$\mathcal{G}_{2k-1}$, where \cite{TH02} contains a proof, and we reproduce the proof in the appendix for completeness.

\begin{lemma}
\label{lem:irr_rep}
For group $\mathcal{G}_{2k-1}$, if $\rho$ is an irreducible representation with dimension greater than $1$, then
$\pi$ defined by $\pi(g_i) = \rho(g_i)$ for all $i \not= 2k-1$, and $\pi(g_{2k-1}) = -\rho(g_{2k-1})$ is another
irreducible
representation.
\end{lemma}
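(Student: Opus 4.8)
The plan is to prove two things: that $\pi$ is a genuinely well-defined representation of $\mathcal{G}_{2k-1}$, and that it is irreducible (and distinct from $\rho$). Since $\mathcal{G}_{2k-1}$ is presented by the generators $g_1, \ldots, g_{2k-1}$ together with $-1$, subject only to the relations $g_i^2 = -1$ and $g_i g_j = -g_j g_i$, checking that $\pi$ is a homomorphism reduces to verifying that the proposed images $\pi(g_i)$ satisfy these same relations. I would first handle the squaring relation: for $i \neq 2k-1$ nothing changes, and $\pi(g_{2k-1})^2 = (-\rho(g_{2k-1}))^2 = \rho(g_{2k-1})^2 = \rho(-1)$, so $g_i^2 = -1$ survives for every $i$.

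The only sign-sensitive point is the anti-commutation relation involving the flipped generator. For $i, j \neq 2k-1$ it holds because $\pi$ and $\rho$ agree there. In the mixed case I would compute $\pi(g_i)\pi(g_{2k-1}) = -\rho(g_i)\rho(g_{2k-1})$ and $-\pi(g_{2k-1})\pi(g_i) = \rho(g_{2k-1})\rho(g_i)$, and observe these coincide precisely because $\rho(g_i)\rho(g_{2k-1}) = -\rho(g_{2k-1})\rho(g_i)$. The conceptual point is that introducing a single sign on one factor of each anti-commuting pair leaves the defining $-1$ intact, so $\pi$ respects all relations and is a valid representation. For irreducibility I would argue that $\pi$ and $\rho$ have exactly the same invariant subspaces. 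Each generator satisfies $\pi(g_i) = \pm \rho(g_i)$, and for any subspace $W$ and invertible $T$ one has $(-T)W = TW$ as sets (a subspace is closed under negation), so $W$ is $\pi(g_i)$-invariant iff it is $\rho(g_i)$-invariant. Because a subspace is invariant under a representation iff it is invariant under all generators, $W$ is $\pi$-invariant iff it is $\rho$-invariant. Since $\rho$ is irreducible, its only invariant subspaces are $0$ and $\mathbb{C}^n$, hence the same holds for $\pi$, which is therefore irreducible. Distinctness is immediate: $\pi(g_{2k-1}) = -\rho(g_{2k-1}) \neq \rho(g_{2k-1})$, the latter being invertible and hence nonzero.

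I expect no serious obstacle: the entire argument is sign bookkeeping in the relation check plus the elementary correspondence of invariant subspaces, and the single subtlety worth stating cleanly is why the anti-commutation survives the sign flip. I would note that the hypothesis $\dim \rho > 1$ plays no role in the irreducibility argument itself; it is presumably present to pin down that $\rho$ lies in the nondegenerate family, since $\dim \rho > 1$ forces $\rho(-1) = -I$ (any representation with $-1$ in its kernel factors through the abelian quotient $\mathcal{G}_{2k-1}/\langle -1\rangle$ and is thus one-dimensional). If one wanted the stronger statement that $\pi$ is inequivalent to $\rho$, not merely unequal, I would compare the scalar by which the central element $\omega = g_1 \cdots g_{2k-1}$ acts: the sign flip on $g_{2k-1}$ negates $\rho(\omega)$, so $\rho$ and $\pi$ carry opposite central characters and cannot be conjugate.
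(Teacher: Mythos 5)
Your argument is correct, but your emphasis is inverted relative to the paper's. The paper's proof consists entirely of the inequivalence argument that you relegate to a closing remark: it observes that $g_1 g_2 \cdots g_{2k-1}$ is central, invokes Schur's lemma to write $\rho(g_1 \cdots g_{2k-1}) = \lambda I$, deduces $\rho(g_{2k-1}) = -\tfrac{1}{\lambda}\rho(g_1 \cdots g_{2k-2})$, and then shows any hypothetical intertwiner $T$ between $\pi$ and $\rho$ would force $\pi(g_{2k-1}) = \rho(g_{2k-1})$, a contradiction --- which is precisely your observation that the sign flip negates the scalar by which the central element acts, in slightly different clothing. Note that in context ``another irreducible representation'' must mean a different \emph{equivalence class}: the lemma is used in the proof of Lemma \ref{lem:groupstr} to pair off the irreducible representations of dimension greater than $1$ and conclude $2n_1^2 = 2^{2k-1}$, so mere inequality of the maps $\pi$ and $\rho$ (your ``distinctness is immediate'' step) would not suffice, and the central-element comparison should be the main body of your proof rather than an optional addendum. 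Conversely, the verifications you spell out carefully --- that the sign flip preserves the relations $g_i^2 = -1$ and $g_i g_j = -g_j g_i$, and that $\pi$ and $\rho$ have identical invariant subspaces, so irreducibility transfers --- are exactly the routine steps the paper leaves implicit, and they are worth recording. Your parenthetical diagnosis of the hypothesis $\dim \rho > 1$ is also sound: it forces $\rho(-1) = -I$, since $\rho(-1) = \pm I$ by Schur and $\rho(-1) = I$ would make $\rho$ factor through the abelian quotient $\mathcal{G}_{2k-1}/\{\pm 1\}$, whose irreducible representations are one-dimensional; like the paper's proof, though, your inequivalence argument never actually uses this hypothesis.
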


Next lemma shows that there are only two nondegenerate irreducible representations of $\mathcal{G}_{2k-1}$, and
both of them are of dimension $2^{k-1}$.

\begin{lemma}
\label{lem:groupstr}
Group $\mathcal{G}_{2k-1}$ has $2^{2k-1}+2$ irreducible representations. Two are $2^{k-1}$ dimensional,
$2^{2k-1}$ are 1 dimensional.
\end{lemma}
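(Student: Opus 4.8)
The plan is to read off all character-theoretic invariants of $\mathcal{G}_{2k-1}$ directly from its presentation, using the two counting formulas recalled in Section~2: the number of irreducible representations equals the number of conjugacy classes, and the sum of the squares of their dimensions equals $|\mathcal{G}_{2k-1}| = 2^{2k}$. The whole argument then reduces to computing the commutator subgroup (which controls the one-dimensional characters) and the conjugacy class structure (which controls the total count), after which the dimensions of the two exceptional representations are forced.

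First I would count the one-dimensional representations by computing the commutator subgroup. From $g_i g_j = -g_j g_i$ and $g_i^{-1} = -g_i$ one gets $[g_i,g_j] = g_i g_j g_i^{-1} g_j^{-1} = -1$ for $i \neq j$, so $-1 \in [\mathcal{G}_{2k-1},\mathcal{G}_{2k-1}]$; conversely, modulo $\{\pm 1\}$ all generators commute and have order dividing $2$, so $\mathcal{G}_{2k-1}/\{\pm 1\}$ is the elementary abelian group $(\mathbb{Z}/2)^{2k-1}$ and $[\mathcal{G}_{2k-1},\mathcal{G}_{2k-1}] = \{\pm 1\}$. Hence the abelianization has order $2^{2k}/2 = 2^{2k-1}$, which gives exactly $2^{2k-1}$ one-dimensional representations.

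Next I would pin down the conjugacy classes. Since $\mathcal{G}_{2k-1}/\{\pm 1\}$ is abelian, for any $h$ and any $w$ the element $hwh^{-1}w^{-1}$ lies in $[\mathcal{G}_{2k-1},\mathcal{G}_{2k-1}] = \{\pm 1\}$, so every conjugate of $w$ equals $w$ or $-w$; thus each central element is its own class and each non-central element has class $\{w,-w\}$ of size exactly $2$. It remains to identify the center. Writing an element as $\pm\prod_{i\in S} g_i$, conjugation by $g_j$ multiplies it by $(-1)^{|S|}$ when $j\notin S$ and by $(-1)^{|S|-1}$ when $j\in S$; demanding invariance for all $j$ forces $S=\emptyset$ or $S=\{1,\dots,2k-1\}$, where the latter survives precisely because $2k-1$ is odd. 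Therefore $Z(\mathcal{G}_{2k-1}) = \{\pm 1, \pm c\}$ with $c = g_1 g_2 \cdots g_{2k-1}$, a subgroup of order $4$. Counting classes then gives $4 + (2^{2k}-4)/2 = 2^{2k-1}+2$ conjugacy classes, hence $2^{2k-1}+2$ irreducible representations in all.

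Finally, subtracting the $2^{2k-1}$ linear characters leaves exactly two irreducible representations of dimension greater than $1$, and by the sum-of-squares formula their dimensions satisfy $d_1^2 + d_2^2 = 2^{2k} - 2^{2k-1} = 2^{2k-1}$. To force $d_1 = d_2$ I would invoke Lemma~\ref{lem:irr_rep}: starting from one such $\rho$, flipping the sign on $g_{2k-1}$ produces a second irreducible representation $\pi$ of the same dimension, inequivalent to $\rho$ because the central element $c$ acts as a nonzero scalar $\lambda I$ on $\rho$ (by Schur's lemma, with $\rho(-1)=-I$) and as $-\lambda I$ on $\pi$. Since these are the only two higher-dimensional irreducibles, we conclude $d_1 = d_2 = d$ and $2d^2 = 2^{2k-1}$, i.e. $d = 2^{k-1}$. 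I expect the main obstacle to be precisely the center computation together with the claim that every non-central class has size $2$: the remaining steps are bookkeeping with the two counting formulas, but the parity argument isolating $c = g_1\cdots g_{2k-1}$, and the use of $2k-1$ being odd, is where the actual structure of $\mathcal{G}_{2k-1}$ is needed.
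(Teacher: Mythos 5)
Your proposal is correct and follows essentially the same route as the paper's appendix proof: count the conjugacy classes (the central elements $\pm 1, \pm g_1\cdots g_{2k-1}$ as singletons, all other classes of the form $\{w,-w\}$, giving $2^{2k-1}+2$), count the $2^{2k-1}$ one-dimensional representations, and invoke Lemma~\ref{lem:irr_rep} together with the sum-of-squares formula to force the two remaining irreducibles to have dimension $2^{k-1}$. The only cosmetic differences are that you obtain the one-dimensional count via the commutator subgroup $\{\pm 1\}$ and the abelianization $(\mathbb{Z}/2)^{2k-1}$ where the paper constructs the characters explicitly, and your parity argument for the center is a cleaner version of the paper's construction of an anti-commuting element.
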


In fact, we can write down all the irreducible representations explicitly, for example, see \cite{TH02}. However, we could avoid doing that.

\begin{lemma} Square CODs $\mathcal{C}_{k}$ and $\mathcal{C}^-_{k}$ induces two nonequivalent $2^{k-1}$ dimensional
irreducible
representations.
\end{lemma}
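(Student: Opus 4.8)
The plan is to push each canonical design through the construction $G_i = E_0^H E_i$ of Proposition~\ref{prop:basic}, read off the resulting representations of $\mathcal{G}_{2k-1}$, and then invoke Lemma~\ref{lem:groupstr} so that irreducibility reduces to a dimension count. First I would expand $\mathcal{C}_k = \sum_i (z_i A_i + z_i^* B_i)$, form the $2k$ unitaries $E_0,\ldots,E_{2k-1}$ and the skew-Hermitian matrices $G_i = E_0^H E_i$, and record that $\rho(g_i) = G_i$ is a representation of $\mathcal{G}_{2k-1}$ of dimension $2^{k-1}$, the size of $\mathcal{C}_k$. The one structural fact needed at the outset is that $\rho$ is \emph{nondegenerate}, i.e. $\rho(-1) = G_i^2 = -I_n$; this is immediate from $g_i^2 = -1$ together with $G_i^2 = -I_n$.

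Irreducibility then follows with no explicit computation. Since $-1$ is central of order $2$, in every irreducible representation $\sigma$ one has $\sigma(-1) = \pm I$; a one-dimensional $\sigma$ forces $\sigma(-1) = I$ (compare $\sigma(g_i g_j)$ with $\sigma(g_j g_i)$), so by Lemma~\ref{lem:groupstr} every irreducible representation with $\sigma(-1) = -I$ has dimension exactly $2^{k-1}$. Now let $W \subseteq \mathbb{C}^{2^{k-1}}$ be any nonzero invariant subspace. Then $\rho(-1)|_W = -I_W$, so every irreducible constituent of $\rho|_W$ is nondegenerate and hence of dimension $2^{k-1}$; as $W \neq 0$, its dimension is a positive multiple of $2^{k-1}$, and $\dim W \le 2^{k-1}$ forces $W = \mathbb{C}^{2^{k-1}}$. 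Thus $\rho$ is irreducible, and the same conclusion holds for $\mathcal{C}_k^-$ (alternatively, its irreducibility is inherited from $\rho$ through Lemma~\ref{lem:irr_rep}).

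For nonequivalence I would track how conjugating a single variable propagates to the $G_i$. Replacing $z_k$ by $z_k^*$ interchanges $A_k$ and $B_k$, which leaves $E_k = A_k + B_k$ fixed but sends $E_{2k} = \sqrt{-1}(A_k - B_k)$ to $-E_{2k}$; after the index shift this flips exactly $G_{2k-1}$ and fixes every other $G_i$. Hence the representation attached to $\mathcal{C}_k^-$ is precisely the twist $\pi$ of Lemma~\ref{lem:irr_rep}. To separate $\rho$ from $\pi$ I would use $\omega = g_1 g_2 \cdots g_{2k-1}$: because $2k-1$ is odd, $\omega$ is central, so Schur's lemma gives $\rho(\omega) = \lambda I$ with $\lambda \neq 0$, the scalar being nonzero since $\rho(\omega)$ is a product of unitaries. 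The single sign change sends $\omega \mapsto -\omega$, so $\pi(\omega) = -\lambda I$; since equivalent representations agree on the central scalar $\rho(\omega)$, the inequality $\lambda \neq -\lambda$ shows $\rho$ and $\pi$ are nonequivalent.

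The main obstacle is the bookkeeping in the last step: one must check carefully that conjugating a variable flips precisely one of the $G_i$, so that the outcome is exactly the twist of Lemma~\ref{lem:irr_rep} and not some other perturbation, and then confirm both the centrality of $\omega$ and the relation $\omega \mapsto -\omega$ under the sign change. By the symmetry remark following Definition~\ref{def:can} the choice of which variable to conjugate is immaterial, so it suffices to treat $z_k$; the base case $k = 1$, where both designs are one-dimensional with $G_1 = \pm\sqrt{-1}$, is verified directly.
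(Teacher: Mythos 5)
Your proposal is correct and follows essentially the same route as the paper: irreducibility by the dimension count of Lemma~\ref{lem:groupstr} plus nondegeneracy of $\rho(-1)=-I$, and nonequivalence by observing that conjugating $z_k$ negates exactly $E_{2k}$, hence exactly $G_{2k-1}$, so the result is the twist of Lemma~\ref{lem:irr_rep}. Your central-element argument with $\omega = g_1\cdots g_{2k-1}$ and Schur's lemma merely inlines the paper's own appendix proof of Lemma~\ref{lem:irr_rep} rather than citing it, and your explicit handling of the base case $k=1$ (where one-dimensional representations of $\mathcal{G}_1$ need not be degenerate) is if anything slightly more careful than the paper's.
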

\begin{proof} Denote by $\rho$ and $\rho'$ the group representations induced by $\mathcal{C}_{k}$ and $\mathcal{C}^-_{k}$.
Since $\dim(\rho) = 2^{k-1}$, by Lemma \ref{lem:groupstr}, $\rho$ is either irreducible or a direct sum of $2^{k-1}$ one dimensional
representations. Since all one dimensional representation are degenerate, and $\rho$ is non-degenerate, it should be a $2^{k-1}$ dimensional
irreducible representation, as well as $\rho'$.

As usual, let $\mathcal{C}_k = \sum_{i=1}^k \left( z_i A_i + z_i^* B_i \right)$, and then $E_i = A_i + B_i$, $E_{i+k} = \sqrt{-1} (A_i - B_i)$.
The only difference between $\mathcal{C}_k$ and $\mathcal{C}^-_k$ is that $z_k$ is conjugated, which results in swapping $A_k$ and $B_k$,
and thus $E_{2k}$ is negated while all the other $E_i$'s are unchanged, i.e., only $G_{2k-1}$ is negated. By Lemma \ref{lem:irr_rep}, we know that $\rho'$ is another
irreducible representation different from $\rho$.
\end{proof}

Before proving the structure theorem, let's prove a lemma about unitary representations of a finite group, which
says if two unitary representations are similar, then they are unitarily similar, in the sense that the linear
transformation is unitary. We feel that this result is very likely to be known in math literature, although we can't find
an exact place where it appears.

\begin{lemma}
\label{lem:unitary_equ}
Let $\pi, \sigma : G \to U_n(\mathbb{C})$ be two unitary representations of finite group $G$. If $\pi, \sigma$ are equivalent, i.e., there exists $T \in GL_n(\mathbb{C})$ such that $$T \pi(g) = \sigma(g) T, \quad\quad \forall g \in G,$$ then $\pi$ and $\sigma$ unitarily equivalent, that is, there exists $T' \in U_n(\mathbb{C})$ such that $$T' \pi(g) = \sigma(g) T', \quad\quad \forall g \in G.$$
\end{lemma}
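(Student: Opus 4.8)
The plan is to exploit unitarity of both representations together with the polar decomposition of the intertwiner $T$. The guiding idea is that the only obstruction to $T$ being unitary is its positive part, and that this positive part turns out to commute with the representation $\pi$, so it can be cancelled from the intertwining relation.

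First I would form $P = T^H T$, which is positive definite Hermitian since $T$ is invertible. The crucial step is to show that $P$ commutes with every $\pi(g)$. Using only $T\pi(g) = \sigma(g)T$ together with unitarity of $\sigma$ (so that $\sigma(g)^H\sigma(g) = I_n$), one computes
\[
\pi(g)^H P \pi(g) = (T\pi(g))^H (T\pi(g)) = (\sigma(g)T)^H(\sigma(g)T) = T^H \sigma(g)^H \sigma(g) T = T^H T = P.
\]
Since $\pi$ is unitary, $\pi(g)^H = \pi(g)^{-1}$, so multiplying the identity $\pi(g)^H P\pi(g) = P$ on the left by $\pi(g)$ yields $\pi(g) P = P\pi(g)$ for all $g \in G$.

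Next I would pass from $P$ to its unique positive definite square root $P^{1/2}$. The point I need here is that $P^{1/2}$ still commutes with every $\pi(g)$; this holds because $P^{1/2}$ can be written as a polynomial in $P$ (obtained by interpolating the square-root function at the finitely many eigenvalues of $P$), and any matrix commuting with $P$ commutes with every polynomial in $P$. I then take the polar decomposition $T = U P^{1/2}$ with $U = T P^{-1/2}$, and check directly that $U \in U_n(\mathbb{C})$ from $U^H U = P^{-1/2} T^H T P^{-1/2} = P^{-1/2} P P^{-1/2} = I_n$.

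Finally I would substitute $T = U P^{1/2}$ into the intertwining relation to get $U P^{1/2}\pi(g) = \sigma(g) U P^{1/2}$. Using the commutation of $P^{1/2}$ with $\pi(g)$ rewrites the left-hand side as $U\pi(g)P^{1/2}$, and cancelling the invertible factor $P^{1/2}$ on the right gives $U\pi(g) = \sigma(g)U$ for all $g$. Thus $T' = U$ is the desired unitary intertwiner. The only genuinely delicate point is justifying that $P^{1/2}$ inherits the commutation with the representation; everything else is a direct unitarity computation, so I expect the functional-calculus step to be the one worth stating carefully.
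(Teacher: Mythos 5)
Your proof is correct and takes essentially the same route as the paper: both arguments pass to the polar decomposition $T = T'\,|T|$ with $|T| = (T^H T)^{1/2}$, show that $T^H T$ commutes with $\pi$ using unitarity of both representations, transfer this commutation to the square root, and then cancel the positive factor from the intertwining relation. The only difference is at the functional-calculus step, where you realize $P^{1/2}$ as an interpolating polynomial in $P$ at its finitely many eigenvalues --- a slightly more elementary finite-dimensional justification than the paper's appeal to the Weierstrass approximation theorem --- but the structure of the argument is identical.
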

\begin{proof} Prove by construction. Since $T \pi = \sigma T$ and $\pi$ is unitary, which implies $\pi(g)^H = \pi(g^{-1})$, we have $\pi T^H = (\pi^{-1})^H T^H = (T \pi^{-1})^H = (\sigma^{-1} T)^H = T^H (\sigma^{-1})^H = T^H \sigma$. Thus, $T^H$ also intertwines with $\pi$ and $\sigma$. Define $|T| = \sqrt{T^H T}$, which is meaningful since a positive-semidefinite Hermitian matrix has a unique positive-semidefinite square root.

Letting $T' = T|T|^{-1}$, we claim this is the desired $T'$. First, let's verify that $T'$ is unitary.
\begin{eqnarray*}
T'^H T' & = & (T|T|^{-1})^H T|T|^{-1}\\
 = |T|^{-1} T^H T |T|^{-1}
 & = & |T|^{-1} |T|^2 |T|^{-1}
  = I.
\end{eqnarray*} Then, let's verify $T'$ intertwines,
i.e., $T' \pi = \sigma T'$, which is $T |T|^{-1} \pi = \sigma T |T|^{-1} = T \pi |T|^{-1}$. Since $T$ is invertible,
it suffices to prove $|T|^{-1} \pi = \pi |T|^{-1}$, that is, $|T|$ commutes with $\pi$. Notice that $T^H T$ commutes with $\pi$, since $T^H T \pi = T^H \sigma T = \pi T^H T$, and $|T|$ can be approximated by polynomials in $T^* T$ (For example, apply Weierstrass approximation theorem).
Combining with the fact that every polynomial in $T^H T$ commutes with $\pi$, we have $|T|$ also commutes with $\pi$, which completes our proof.
\end{proof}

Now, we are ready to prove the structure theorem of square COD.

\begin{proof} Let $\mathcal{O}_z$ be an $[n, n, k]$ square COD. As in \eqref{equ:UEi}, splitting the real part
and imaginary part, write
$$
\mathcal{O}_z = x_0 E_0 + x_1 E_1 + \ldots + x_{2k-1} E_{2k-1}.
$$
As usual, let  $G_i = E_0^H E_i$ for $i=1,2,\ldots, 2k-1$. It turns out $G_1, \ldots, G_{2k-1}$ induce a representation $\rho$, which is an unitary, nondegenerate representation of group $\mathcal{G}_{2k-1}$ ($\rho$ is defined in the natural way, i.e., $\rho(g_i) = G_i$ for all $i$, and $\rho(-1) = -I$).

Since every representation of a finite group is a direct sum of irreducible ones (see Section 2), there
exists  $T \in GL_n(\mathbb{C})$ such that
\begin{eqnarray}
 \rho(g_i) & = & T^{-1} \rho_1(g_i) \oplus \ldots \oplus \rho_m(g_i) T \nonumber\\
& = & T^{-1} \begin{pmatrix} \rho_1(g_i) & & & \\
& \rho_2(g_i) & &\\
& & \ddots & \\
& & & \rho_m(g_i) \end{pmatrix} T, \label{equ:Giis}
\end{eqnarray}
where $\rho_1, \ldots, \rho_m$ are irreducible representations of group $\mathcal{G}_{2k-1}$. By Lemma \ref{lem:unitary_equ}, $T$ could be chosen to be unitary.

Next, we shall show that $\rho_1, \ldots, \rho_m$ are all nondegenerate. Otherwise, assume $\rho_1$ is degenerate without loss of generality, i.e., $\rho_1(1) = \rho_1(-1)$. Then,
$$
\rho(1) = G_1 = T^{-1} \begin{pmatrix} \rho_1(1) & & & \\
& \rho_2(1) & &\\
& & \ddots & \\
& & & \rho_m(1)  \end{pmatrix} T
$$
and
$$
\rho(-1) = -G_1 = T^{-1} \begin{pmatrix} \rho_1(-1) = \rho_1(1) & & & \\
& \rho_2(-1) & &\\
& & \ddots & \\
& & & \rho_m(-1) \end{pmatrix} T.
$$
Thus
$$
 \begin{pmatrix} \rho_1(1) & & & \\
& \rho_2(1) & &\\
& & \ddots & \\
& & & \rho_m(1) \end{pmatrix} =
\begin{pmatrix} - \rho_1(1) & & & \\
& -\rho_2(-1) & &\\
& & \ddots & \\
& & & -\rho_m(-1) \end{pmatrix},
$$
which implies $\rho_1(1) = - \rho_1(1) \Rightarrow \rho_1(1) = 0 \Rightarrow \rho_1 = 0$. Contradiction!

By Lemma \ref{lem:groupstr}, all non-degenerate irreducible representations are of dimension $2^{k-1}$, we have $\dim(\rho_i) = 2^{k-1}$ for $i = 1, \ldots, m$, which implies $n=m 2^{k-1}$ for some integer $m$, which proves the first part of the theorem.

\vspace{0.2cm}
In order to prove the second part, we will expand $\mathcal{O}_z$ explicitly by \eqref{equ:Giis}.
By definition, $G_i = E_0^H E_i \Rightarrow E_i = E_0 G_i$. By \eqref{equ:Giis},
$$
E_i = E_0 T^{-1} \begin{pmatrix} \rho_1(g_i) & & & \\
& \rho_2(g_i) & &\\
& & \ddots & \\
& & & \rho_m(g_i) \end{pmatrix}T.
$$
Then,
\begin{eqnarray*}
\mathcal{O}_z & = & x_0 E_0 + x_1 E_1 + \ldots + x_{2k-1} E_{2k-1} \\
& = & x_0 E_0 + \sum_{i=1}^{2k-1} x_i E_0 T^{-1} \begin{pmatrix} \rho_1(g_i) & & & \\
& \rho_2(g_i) & &\\
& & \ddots & \\
& & & \rho_m(g_i) \end{pmatrix} T\\
& = &  E_0T^{-1} (Ix_0) T + E_0 T^{-1} \begin{pmatrix} \sum_{i=1}^{2k-1} x_i \rho_1(g_i) & & \\
& \ddots & \\
& & \sum_{i=1}^{2k-1} x_i \rho_m(g_i) \end{pmatrix} T\\
& = & E_0 T^{-1} \begin{pmatrix} x_0I + \sum_{i=1}^{2k-1} x_i \rho_1(g_i) & & \\
& \ddots & \\
& & x_0I + \sum_{i=1}^{2k-1} x_i \rho_m(g_i) \end{pmatrix} T\\
& = & E_0 T^{-1} \begin{pmatrix} \mathcal{C}_k^{\pm} & & \\
& \ddots & \\
& & \mathcal{C}_k^{\pm} \end{pmatrix} T.
\end{eqnarray*}
Set $U = E_0 T^{-1}$ and $V = T$.

Without loss of generality, assume
$$
\mathcal{O}_z =
U  \diag(\underbrace{\mathcal{C}_k, \ldots, \mathcal{C}_k}_{n_1}, \underbrace{\mathcal{C}^{-}_k, \ldots, \mathcal{C}^{-}_k}_{n_2}) V ,
$$
for some $n_1 + n_2 = n/2^{k-1}$, which can be achieved by permuting $U$ and $V$. Thus, representation
$\rho$ induced by $\mathcal{O}_z$ is a direct sum of $n_1$ copies of $\rho_1$ and $n_2$ copies of $\rho_2$, where $\rho_1, \rho_2$ are induced by $\mathcal{C}_k, \mathcal{C}^-_k$
respectively, which implies $\mathrm{Tr}(\rho) = n_1 \mathrm{Tr}{(\rho_1)} +  n_2 \mathrm{Tr}{(\rho_2)}$. Since $\mathrm{Tr}(\rho_1), \mathrm{Tr}(\rho_2)$
are linearly independent (recall that the characters of all irreducible representations form a basis for class functions), $n_1, n_2$ are uniquely determined.
\end{proof}
\begin{remark} By Theorem \ref{thm:main}, for COD with size $[n, n, k]$, there are $n/2^{k-1} + 1$ canonical forms and thus $n/2^{k-1} + 1$  equivalent classes.
\end{remark}

\begin{remark} If $n = 2^{k-1}$, $\rho$ is irreducible, and $U, V$ are uniquely determined by Schur's lemma. If $n = 2^k$ and $\mathcal{O}_z$ is equivalent to $\mathcal{C}_k
\oplus \mathcal{C}^-_k$, $U$ and $V$ are also uniquely determined by a simple generalization of Schur's lemma. For all the other cases, $U$ and $V$ are not unique.
\end{remark}

\section{Square COD without linear combination}
From the main theorem, it's not difficult to prove the following result, which is
the structure theorem for square COD without linear combination, that is, each entry is either $\pm z_i, \pm z_i^*$ or $0$.

\begin{definition}
Square COD without linear combination of size $[n, n, k]$ is a COD such that each
entry is $\pm z_i, \pm z_i^*$ or $0$, $i = 1, \ldots, k$.
\end{definition}

For square COD without linear combination, we have similar conclusion, that any design
can be obtained from canonical one by left multiplying $U$ and right multiplying $V$, where
$U, V \in U_n(\mathbb{C})$ are permutation matrices with signs (nonzero entries is either $1$
or $-1$, and each row and column has only one nonzero entry). The proof is not very difficult given the main theorem. However, we don't know whether it can be proved by combinatorial argument without the main theorem.

\begin{corollary} Square COD of size $[n, n, k]$ without linear combination $\mathcal{O}_z$ exists if and only if
$$
2^{k-1} | n,
$$
and $\mathcal{O}$ can be obtained from
\begin{eqnarray*}
 &  & \diag(\underbrace{\mathcal{C}_k, \ldots, \mathcal{C}_k}_{n_1}, \underbrace{\mathcal{C}^{-}_k, \ldots, \mathcal{C}^{-}_k}_{n_2}) \\
& = & (I_{n_1} \otimes \mathcal{C}_k) \oplus  (I_{n_2} \otimes \mathcal{C}_k^{-})
\end{eqnarray*}
for some $n_1 + n_2 = n/2^{k-1}$,
by row and column permutation, and possibly multiply some rows or columns by $-1$.
\end{corollary}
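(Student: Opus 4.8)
The plan is to promote the unitary matrices $U,V$ supplied by Theorem \ref{thm:main} to signed permutation matrices, using only the extra rigidity that comes from each entry being a single monomial. First I record the combinatorial shape of a monomial design: writing $\mathcal{O}_z=\sum_i(z_iA_i+z_i^*B_i)$, disjointness of the supports of the monomials forces each coefficient matrix $E_j=A_j+B_j$ ($1\le j\le k$) to be a genuine signed permutation and each $E_{j+k}=\sqrt{-1}(A_j-B_j)$ to be $\sqrt{-1}$ times a signed permutation. Monomiality is then equivalent to the pairing condition that $E_j$ and $E_{j+k}$ have the same support, i.e. $E_{j+k}=\sqrt{-1}\,\Theta_j E_j$ for a diagonal sign matrix $\Theta_j$ whose $r$-th entry is $+1$ or $-1$ according to whether $z_j$ or $z_j^*$ occupies the nonzero slot of row $r$; thus $\Theta_j$ records the conjugation signature of variable $j$. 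The canonical design $D$ carries analogous signatures $\Delta_j$, and I will use two facts about it: that its first coefficient matrix is the identity ($E_1^0=I_n$, by the recursion defining $\mathcal{C}_k$), and the bootstrapping observation that if $N$ is any monomial COD with $E_1(N)=I$ and $W$ is unitary with $NW$ (or $WN$) monomial, then the $x_1$-coefficient matrix of $NW$ equals $W$, forcing $W$ to be a signed permutation.

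Next I feed the main theorem in. From $\mathcal{O}_z=UDV$ and the independence of $z_i,z_i^*$ I get $E_i=UE_i^0V$ for every $i$. Substituting the two expressions $E_{j+k}=\sqrt{-1}\,\Theta_jE_j$ and $E_{j+k}^0=\sqrt{-1}\,\Delta_jE_j^0$ and cancelling the invertible $E_j$ yields $U\Delta_jU^H=\Theta_j$ for all $j$, and a mirror computation on the right gives $V^H\Delta_j'V=\Theta_j'$ for the column signatures. Hence $U$ (respectively $V$) conjugates the abelian group generated by the diagonal sign matrices $\{\Delta_j\}$ (respectively $\{\Delta_j'\}$) into diagonal matrices. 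The key structural input about $\mathcal{C}_k$ is precisely the symmetry highlighted after Theorem \ref{thm:main}: its $2^{k-1}$ rows realize $2^{k-1}$ distinct conjugation patterns, which says exactly that within one canonical block the vectors $(\Delta_1(r),\dots,\Delta_k(r))$ separate the coordinates. Consequently $U$ must carry the joint eigenspaces of $\{\Delta_j\}$ to those of $\{\Theta_j\}$; when $n=2^{k-1}$ these eigenspaces are lines and $U,V$ are signed permutations on the nose, which already settles the irreducible case.

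The main obstacle is the multiplicity case $n=m\,2^{k-1}$ with $m>1$, where each joint eigenspace has dimension $m$ and $U,V$ are only forced to be block-monomial with unitary $m\times m$ blocks; this is the same non-uniqueness of $U,V$ flagged in the remarks. I plan to resolve it as follows. Because $\mathcal{C}_k$ and $\mathcal{C}_k^-$ realize complementary (even- versus odd-weight) conjugation patterns, the signatures already separate the $\mathcal{C}_k$-isotypic part from the $\mathcal{C}_k^-$-isotypic part, so $U,V$ cannot mix the two types and I may argue on each type separately, say on $D=\mathcal{C}_k\otimes I_m$. Absorbing signed permutations $\Pi_U,\Pi_V$ that implement the coordinate bijection between eigenspaces, I reduce to $\mathcal{O}_z=\Pi_U(\tilde UD\tilde V)\Pi_V$ with $\tilde U=\sum_pE_{pp}\otimes U_p$ and $\tilde V=\sum_qE_{qq}\otimes V_q$ block-diagonal across the $2^{k-1}$ eigenspaces. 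Writing $\tilde UD\tilde V=\sum_{p,q}(\mathcal{C}_k)_{pq}\,E_{pq}\otimes U_pV_q$, monomiality of the middle factor forces $U_pV_q$ to be a $\pm1$ signed permutation for every $(p,q)$ in the support of $\mathcal{C}_k$ (its entries lie in $\{0,\pm1\}$ with one nonzero per row and column); since that support is connected (equivalently $\rho_1$ is irreducible), this propagates to $V_q=V_*R_q$ and $U_p=L_pV_*^{-1}$ with $R_q,L_p$ signed permutations and a single common unitary $V_*$. The factor $I\otimes V_*$ is central for $\mathcal{C}_k\otimes I_m$ and therefore cancels through $D$, leaving $\tilde U,\tilde V$ equal to block-diagonal signed permutations.

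Putting the pieces together, $\mathcal{O}_z=U_0DV_0$ with $U_0=\Pi_U(\sum_pE_{pp}\otimes L_p)$ a signed permutation; then $U_0^{-1}\mathcal{O}_z=DV_0$ is monomial, so the bootstrapping lemma (using $E_1^0=I$) upgrades $V_0$ to a signed permutation as well, proving that $\mathcal{O}_z$ is obtained from the canonical design by row and column permutations and sign changes. The only genuinely delicate step is the multiplicity analysis of the third paragraph—turning the block-monomial freedom into honest $\pm1$ permutations by exploiting that the monomial constraint couples the multiplicity blocks across the connected support of $\mathcal{C}_k$; everything else is bookkeeping on top of Theorem \ref{thm:main}.
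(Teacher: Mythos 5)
Your proposal is correct in substance, but it takes a genuinely different route from the paper's. The paper proceeds by induction on $k$: using the fact that each variable occurs exactly once in every row and column of a monomial design, it permutes $\mathcal{O}_z$ into the form $\begin{pmatrix} \mathcal{O}_z' & z_k I_{n_1} \\ z_k^* I_{n_2} & -\mathcal{O}_z'^H \end{pmatrix}$ and invokes Theorem \ref{thm:main} only once and only for a numerical fact: writing $\mathcal{O}_z = U\mathcal{C}V$ and matching coefficients gives $A_k = U A_k' V$, so $\mathrm{rank}(A_k) = \mathrm{rank}(A_k') = n/2$; since $n_1, n_2$ are precisely the ranks of $A_k, B_k$ (their supports are disjoint and $A_k + B_k$ is a signed permutation), this forces $n_1 = n_2$ and the induction closes. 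You instead use the full strength of the unitary equivalence and rigidify $U, V$ directly, via the signature identities $U\Delta_j U^H = \Theta_j$ and $V^H \Delta_j' V = \Theta_j'$, the matching of joint eigenspaces (resting on the fact that $\mathcal{C}_k$ and $\mathcal{C}_k^-$ realize the $2^{k-1}$ even- and odd-weight row conjugation patterns respectively, which does hold but needs a short induction alongside the corresponding column-parity statement), the block-monomial analysis with propagation along the connected support of $\mathcal{C}_k$, and cancellation of the common factor $I \otimes V_*$ through $\mathcal{C}_k \otimes I_m$. Your route is heavier but buys more: it pinpoints the exact residual freedom in $U, V$ (block-monomial with one common unitary that commutes through the canonical form), refining the paper's remark on non-uniqueness, and it shows each conjugation pattern occurs with fixed multiplicity $n_1$ or $n_2$; the paper's induction is shorter and nearly combinatorial, with the main theorem confined to one rank count. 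Three blemishes to repair in a write-up, none fatal: the nonzero entries of $\mathcal{C}_k$ are monomials $\pm z_i, \pm z_i^*$, not elements of $\{0, \pm 1\}$ --- what you actually use is that each nonzero $(\mathcal{C}_k)_{pq}$ forces $U_p V_q$ to have entries in $\{0, \pm 1\}$, hence to be a signed permutation by unitarity; connectivity of the bipartite support graph of $\mathcal{C}_k$ should be verified directly (an easy induction using the $z_k I_{2^{k-2}}$ and $-z_k^* I_{2^{k-2}}$ blocks) rather than asserted as ``equivalent'' to irreducibility of $\rho_1$, which it is not; and in the case $n = 2^{k-1}$ the eigenline matching only makes $U, V$ monomial with unit-modulus phases, not signed permutations ``on the nose'' --- fortunately your connectivity argument with $1 \times 1$ blocks already extracts and cancels the common phase, so that case needs no separate treatment.
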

\begin{proof} The first part, that $[n, n, k]$ square COD without linear combination $\mathcal{O}_z$ exists if and only if
$
2^{k-1} | n,
$ follows directly from our main theorem.

For convenience, let's call the following ``equivalent operations''.
\begin{itemize}
\item Permute rows or columns.
\item Multiply some rows or columns by $-1$.
\end{itemize}
Now, we shall prove by induction on $k$ that $\mathcal{O}_z$ can be obtained from $\bigoplus_{i=1}^{n/2^{k-1}} \mathcal{C}_k^{\pm}$
by equivalent operations. For $k = 1$, canonical form $\mathcal{C}_1^\pm \in \{ (z_1), (z^*_1) \}$, and by equivalent operations, $\mathcal{O}_z$ can be transformed into a diagonal matrix with $z_1$ or $z^*_1$ in its diagonal.

For $k > 1$, since every variable (including $\pm z_i$ or $\pm z^*_i$) appears in each column exactly once, by equivalent operations,  $\mathcal{O}_z$ can be transformed into the following form,
$$
\begin{pmatrix}
\mathcal{O}_z' & z_k I_{n_1} \\
z^*_k I_{n_2} & -\mathcal{O}_z'^H
\end{pmatrix},
$$
where $\mathcal{O}_z'$ is a $[n_1, n_2, k-1]$ COD without linear combinations. If we could prove $n_1 = n_2$, then we are done. Because assume
$\mathcal{O'}_z = \bigoplus_{i=1}^{n/2^{k-1}} \mathcal{C}_{k-1}^{\pm}$, then $\mathcal{O}_z$ can be transformed into the following form by basic operations.
$$
\begin{pmatrix}
\bigoplus_{i=1}^{n/2^{k-1}} \mathcal{C}_{k-1}^{\pm} & z_k I_{n/2} \\
z^*_k I_{n/2} & -\bigoplus_{i=1}^{n/2^{k-1}} \left({\mathcal{C}_{k-1}^{\pm}}\right)^H \\
\end{pmatrix},
$$
which is equivalent to
$\bigoplus_{i=1}^{n/2^{k}} \mathcal{C}_{k}^{\pm}$
by basic operations.

Now, it remains to prove $n_1 = n_2$, and this is where we apply our main theorem. Write
$$
\mathcal{O}_z = \sum_{i = 1}^k z_i A_i + \sum_{i=1}^k z^*_i B_i,
$$
where $A_i, B_i \in M_{n \times n}(\mathbb{C})$. The key observation is: $n_1, n_2$ is the rank of $A_k, B_k$ respectively, because
  the $(i, j)$ entry of $A_k$ and $B_k$
can not be simultaneously nonzero, and $A_k + B_k$ is a permutation matrix with possible signs on rows. Why $A_k + B_k$
is a permutation matrix up to signs? This is because entries in $\{ \pm z_k, \pm z^*_k \}$ should appear in each row
 (and column) exactly once.
 
 By our main theorem, $\mathcal{O}_z = U \mathcal{C} V$,
where $U, V$ are $n \times n$ unitary matrices and $\mathcal{C} = \bigoplus_{i=1}^{n/2^{k}} \mathcal{C}_{k}^{\pm}$
is some canonical form. Similarly write,
$$
\mathcal{C} = \sum_{i = 1}^k z_i A'_i + \sum_{i=1}^k z^*_i B'_i.
$$
Since $\mathcal{O}_z = U \mathcal{C} V$, $A_{k} = U A'_{k} V$ and $B_k = U B'_k V$. It's easy to see the rank of $A'_k$
and $B'_k$ are both $n/2$ by the definition of $\mathcal{C}_k$, which implies the rank of $A_k$ and $B_k$ are also $n/2$, for unitary transformation does
not change the rank.
\end{proof}

\section{COD and Sum of Squares Problem}

Sum of squares composition formula of size $[r, s, n]$ over some field $\mathbb{F}$ is the following identity,
$$
(x_1^2 + \ldots + x_r^2) (y_1^2 + \ldots + y_s^2) = z_1^2 + \ldots + z_n^2,
$$
where each $z_i = z_i(X, Y)$ is a bilinear form in $X$ and $Y$. This problem has been investigated
early in the 19th century (see \cite{Shapiro}, an extensive book on this subject). For general $[r, s, n]$,
this problem is still wide open. For the case $s = n$, Radon and Hurwitz in 1920s proved that formula $[r, n, n]$
over $\mathbb{R}$ or $\mathbb{C}$
exists if and only if $r \le \rho(n)$, where the Hurwitz-Radon function $\rho(n)$ is defined as
\begin{equation}
\label{equ:defrho}
\rho(n) = \begin{cases}
2m+1, & \text{if } m \equiv 0 \pmod 4,\\
2m, & \text{if } m \equiv 1, 2 \pmod 4,\\
2m+2, & \text{if } m \equiv 3 \pmod 4,
\end{cases}
\end{equation}
where $n = 2^m n_0$, $n_0$ odd.

In fact, $[r, s, n]$ formula over $\mathbb{F}$ is equivalent to orthogonal design (OD) over $\mathbb{F}$ with size $[n, s, r]$, where
``OD over $\mathbb{F}$'' is defined as follows.

\begin{definition} \cite{Shapiro} OD (Orthogonal Design) $\mathcal{O}_x$ over $\mathbb{F}$ with size $[p, n, k]$ is a $p \times n$ matrix,
where each entry is an $\mathbb{F}$-linear combination of $x_1, \ldots, x_k$ (think of $x_1, \ldots, x_k$ as formal variables)
such that,
$$
\mathcal{O}_x^T \mathcal{O}_x = (x_1^2 + \ldots + x_k^2) I_n.
$$
\end{definition}
\begin{remark}
Note that OD over $\mathbb{C}$ is not COD. Because in the definition of OD, we take the transpose, while
in COD, it's Hermitian transpose. 
\end{remark}

Recall the definition of COD, $z_1, \ldots, z_k$ are formal complex variables.
In fact, a formal complex variable is equivalent to two formal (real) variables.
For this reason, using formal complex variables seems redundant, and thus we define HOD (Hermitian Orthogonal Design) which
captures COD as a special case.

\begin{definition} HOD (Hermitian Orthogonal Design) $\mathcal{O}_x$ over $\mathbb{F}$ with size $[p, n, k]$ is a $p \times n$ matrix,
where each entry is an $\mathbb{F}$-linear combination of $x_1, \ldots, x_k$
such that,
$$
\mathcal{O}_x^H \mathcal{O}_x = (x_1^2 + \ldots + x_k^2) I_n.
$$
Here, $\mathbb{F}$ is a subring or subfield of $\mathbb{C}$, and when $\mathbb{F}$ is omitted, we assume $\mathbb{F} = \mathbb{C}$.
\end{definition}

By Proposition \ref{prop:basic}, COD with size $[p, n, k]$ is equivalent to HOD with size $[p, n, 2k]$. Since sum of squares formula $[r, s, n]$ is equivalent to OD with size $[n, s, r]$ \cite{Shapiro}, let's forget
about sum of square formulas, and compare what is the difference between OD and HOD.

\vspace{0.2cm}
It is well known \cite{Shapiro} (ch. 0, pp. 3) that OD over field $\mathbb{F}$ with size $[p, n, k]$ is equivalent to $k$ matrices $E_1, \ldots, E_k \in M_{p \times n}(\mathbb{F})$ satisfying
\begin{eqnarray}
E_i^T E_i & = & I_n, \text{ for all } i, \nonumber \\
E_i^T E_j + E_j^T E_i & = & 0, \text{ for all } i \not= j. \label{equ:matrix_OD}
\end{eqnarray}
Similarly, HOD with size $[p, n, k]$ has almost the same characterization except that the transpose $T$ is replaced by Hermitian transpose $H$, that is,
HOD  with size $[p, n, k]$ is equivalent to $k$ matrices $E_1, \ldots, E_k \in M_{p \times n}(\mathbb{C})$ such that
\begin{eqnarray}
E_i^H E_i & = & I_n, \text{ for all } i, \nonumber \\
E_i^H E_j + E_j^T E_i & = & 0, \text{ for all } i \not= j. \label{equ:matrix_HOD}
\end{eqnarray}

It's not clear to us what is the essential difference between \eqref{equ:matrix_OD} and \eqref{equ:matrix_HOD}, but they are surely different. For OD over $\mathbb{R}$ or $\mathbb{C}$, it is known that $[n, n, k]$ is admissible if
and only if $k \le \rho(n)$, where $\rho(n)$ is defined in \eqref{equ:defrho}; For HOD, size $[n, n, 2k]$ is admissible if and only if $2k \le 2m+2$, where $n = 2^m n_0$, $n_0$ odd, by our main theorem. To see another difference, if $k > p$, OD over $\mathbb{C}$ with size $[p, n, k]$ does not exist, since OD with size $[p, k, n]$, $p < k$, does not exist (the fact $[p, n, k]$ is equivalent to $[p, k, n]$ is clear from the view point of Sum of Squares problem, that is, $X$ and $Y$ are symmetric).
However, HOD with size $[p, n, k]$, $p > k$, does exist. Recall that $\mathcal{C}_2$ is COD of size $[2, 2, 2]$, and thus HOD of size $[2, 2, 4]$.
For more such examples, in \cite{Lia03}, Liang constructed  CODs with size $[p, n, k]$ where $k/p = (m+1)/(2m)$, where $n = 2m-1$ or $2m$, which implies the
existence of HODs with size $[p, n, 2k]$, where $2k/p = (m+1)/m > 1$.

\begin{conjecture} \cite{Shapiro}
OD over filed $\mathbb{C}$ with size $[p, n, k]$ is admissible if and only if the same size is admissible over ring $\mathbb{Z}$.
\end{conjecture}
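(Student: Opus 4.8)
The plan is to treat the statement as an equivalence and dispose of the two directions separately, since one is immediate and the other carries all the difficulty. By the matrix characterization \eqref{equ:matrix_OD}, an OD of size $[p,n,k]$ over $\mathbb{Z}$ is a tuple $E_1,\ldots,E_k \in M_{p\times n}(\mathbb{Z})$ with $E_i^T E_i = I_n$ and $E_i^T E_j + E_j^T E_i = 0$ for $i \neq j$. Since $\mathbb{Z} \subseteq \mathbb{C}$ and these defining relations have integer coefficients, any such integral tuple is at once a tuple over $\mathbb{C}$; hence $\mathbb{Z}$-admissibility trivially implies $\mathbb{C}$-admissibility. I would therefore devote the entire argument to the converse.

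For the converse I would set the problem up geometrically. The relations in \eqref{equ:matrix_OD} cut out an affine variety $X_{p,n,k}$ in the space of $k$-tuples of $p\times n$ matrices, and this variety is defined over $\mathbb{Z}$ (indeed over $\mathbb{Q}$). Complex admissibility is exactly the assertion $X_{p,n,k}(\mathbb{C}) \neq \emptyset$. First I would descend from $\mathbb{C}$ to $\overline{\mathbb{Q}}$: since the variety is defined over $\mathbb{Q}$, the Nullstellensatz shows that a nonempty $\mathbb{C}$-locus forces a point with algebraic coordinates, so $X_{p,n,k}(\overline{\mathbb{Q}}) \neq \emptyset$. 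The serious steps are then (i) to descend further to a $\mathbb{Q}$-point and (ii) to clear denominators down to a $\mathbb{Z}$-point. For (i) I would try to exploit the large symmetry group of the problem, namely the action $E_i \mapsto P E_i Q$ by orthogonal $P,Q$ together with the $O(k)$-mixing of the $E_i$, to rationalize an algebraic solution, in the spirit of the structuring argument used for Lemma \ref{lem:unitary_equ}.

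The main obstacle is precisely this descent, and it is why the statement is a conjecture and not a theorem. Over $\mathbb{C}$ the identity $-1 = \sqrt{-1}^{\,2}$ makes available composition formulas that are invisible to any formally real theory, and there is no general mechanism forcing such a complex identity to be realizable with the \emph{same} parameters over $\mathbb{Z}$; step (ii) compounds the trouble because the normalization $E_i^T E_i = I_n$ is not scale invariant, so one cannot simply multiply through by a common denominator as one would for the homogeneous composition formula. I would thus not expect a uniform proof and would instead aim for partial confirmation. The one case I can settle outright is the square case $p = n$: there $[n,n,k]$ is admissible over $\mathbb{C}$ if and only if $k \le \rho(n)$, and the classical Hurwitz--Radon construction already realizes this bound with matrices whose entries lie in $\{0,\pm 1\}$, hence over $\mathbb{Z}$. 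The conjecture is therefore a theorem in the square case, and its open content lives entirely in the non-square case $p > n$, where even the complex admissible region is not fully understood; in that regime I would search for a counterexample as seriously as for a proof.
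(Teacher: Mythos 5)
You were asked to prove a statement that the paper itself does not prove and cannot prove: it is labeled a conjecture, quoted from Shapiro \cite{Shapiro} (ch.~14.22, p.~314), and remains open. So there is no paper proof to compare against, and your proposal --- to its credit --- does not pretend to supply one. What you do assert is essentially all correct. The direction from $\mathbb{Z}$ to $\mathbb{C}$ is indeed immediate from the matrix characterization \eqref{equ:matrix_OD}, since an integral tuple satisfying those polynomial relations is a complex one. The descent from a $\mathbb{C}$-point to a $\overline{\mathbb{Q}}$-point via the Nullstellensatz is valid, because the variety cut out by \eqref{equ:matrix_OD} is defined over $\mathbb{Q}$. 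Your diagnosis of why one cannot clear denominators is exactly right: the constraint $E_i^T E_i = I_n$ is inhomogeneous, and in the composition-formula picture scaling the bilinear forms $z_i$ by $d$ multiplies the right side of the identity by $d^2$ while leaving the left side fixed, so a $\mathbb{Q}$-point does not trivially yield a $\mathbb{Z}$-point. And your square case is correctly settled: over $\mathbb{R}$ or $\mathbb{C}$ the admissible range for $[n,n,k]$ is $k \le \rho(n)$ with $\rho$ as in \eqref{equ:defrho} (the bound is in fact field-independent in characteristic $\ne 2$, as proved in \cite{Shapiro}), and the classical Hurwitz--Radon families consist of signed permutation matrices with entries in $\{0,\pm 1\}$, hence already live over $\mathbb{Z}$.

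The one place where you gesture at an argument that does not exist is step (i), the descent $\overline{\mathbb{Q}} \to \mathbb{Q}$ ``by symmetry.'' The action $E_i \mapsto P E_i Q$ together with orthogonal mixing of the $E_i$ does act on the solution variety, but you offer no mechanism by which an orbit through a $\overline{\mathbb{Q}}$-point must meet a $\mathbb{Q}$-point, and the analogy with Lemma \ref{lem:unitary_equ} is loose: that lemma unitarizes an intertwiner between two representations already given over the same field; it performs no field descent. This is not a flaw you failed to notice --- you flag it yourself as the obstacle --- but be aware that it is precisely the entire content of the conjecture, so your text should be framed as a status report (trivial direction, reduction to descent, settled square case, open non-square case) rather than as a proof outline whose gaps might plausibly be filled by routine work.
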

Assuming the above bold conjecture in \cite{Shapiro} (ch. 14.22, pp. 314), the existence of OD over field $\mathbb{C}$ with size $[p, n, k]$ implies the existence of Hermitian OD with size $[p, n, k]$, but the converse is not true. Corresponding to the above conjecture, there is also a similar  bold conjecture for COD.

\begin{conjecture}
\label{conj:COD}
 COD over $\mathbb{C}$ with size $[p, n, k]$ exists if and only if it is admissible for COD over $\mathbb{Z}$, that is, COD without linear
combinations.
\end{conjecture}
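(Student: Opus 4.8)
The plan is to split Conjecture \ref{conj:COD} into its two implications. The direction ``admissible over $\mathbb{Z}$ implies existence over $\mathbb{C}$'' is immediate: a signed-monomial design, with entries in $\{0, \pm z_i, \pm z_i^*\}$, is in particular a complex design, so there is nothing to prove. All the content lies in the converse: given a complex $[p, n, k]$ COD, produce a signed-monomial $[p, n, k]$ COD. I would first record that for square designs ($p = n$) this is already established in this paper, since Theorem \ref{thm:main} together with the Corollary show that both notions are admissible exactly when $2^{k-1} \mid n$; thus Conjecture \ref{conj:COD} holds whenever $p = n$, and the genuinely open content is the rectangular case $p > n$.

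For the rectangular case I would work through the matrix characterization in the Remark following Proposition \ref{prop:basic}: a complex $[p, n, k]$ COD is the same as $2k$ matrices $E_1, \ldots, E_{2k} \in M_{p \times n}(\mathbb{C})$ with $E_i^H E_i = I_n$ and $E_i^H E_j + E_j^H E_i = 0$ for $i \neq j$. The strategy is to mimic the square-case argument representation-theoretically: the anticommutation relations present (the complexification of) a Clifford-type algebra, and one would try to realize the family $\{E_i\}$ as a structured module, now a \emph{partial-isometry} module, since $E_i^H E_i = I_n$ while $E_i E_i^H$ is only a rank-$n$ projection when $p > n$. The aim is, first, to classify which sizes $[p, n, k]$ are realizable over $\mathbb{C}$ by decomposing such modules into irreducibles as in the proof of Theorem \ref{thm:main}, and second, to exhibit for each admissible size an explicit signed-monomial realization built by direct sums and the iterated construction underlying $\mathcal{C}_k$ and $\mathcal{C}_k^-$. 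The hoped-for engine is that every irreducible constituent of the relevant algebra admits an integral (signed-monomial) model, and that the rectangular partial-isometry structure can then be assembled from these integral blocks by choosing $U \in U_p(\mathbb{C})$ and $V \in U_n(\mathbb{C})$, which preserve the COD property.

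The main obstacle is precisely the step the square case avoids. When $p = n$ the matrices $G_i = E_0^H E_i$ are unitary, so one obtains a genuine unitary representation, Lemma \ref{lem:unitary_equ} lets the intertwiner be taken unitary, and the complete classification forces every design into signed-monomial canonical form. When $p > n$ the $G_i$ are no longer unitary, as $E_0 E_0^H$ is only a projection, so there is no ambient group representation to decompose, no Schur-type rigidity, and hence no automatic reason that a genuinely complex combination of $z_i, z_i^*$ can be rotated into a monomial by unitary $U, V$. This is exactly the integrality gap that keeps Shapiro's analogous conjecture for OD open in general, so I expect Conjecture \ref{conj:COD} to be of comparable difficulty in full. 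Accordingly my realistic target would be partial results: the square case (already done); sizes with $p$ close to $n$, where one can embed into or restrict from the square classification; and small $k$, where the Clifford module theory is explicit enough to list all rectangular designs and check each against an integral model. A uniform rationalization theorem valid for all $p > n$ would require a new rigidity statement beyond the representation theory developed in this paper.
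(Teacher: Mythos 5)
There is no paper proof to compare your attempt against: the statement you were given is posed in the paper as an open conjecture (a ``bold conjecture'' modeled on Shapiro's analogous conjecture for OD), and the paper offers no argument for it. Your proposal, appropriately, does not claim to prove it either; it is a research program, and your closing admission that a uniform statement for $p > n$ would need rigidity beyond the paper's representation theory is an accurate reading of the state of the art. The parts of your write-up that can be checked are correct: the direction from admissibility over $\mathbb{Z}$ to existence over $\mathbb{C}$ is trivial, since entries in $\{0, \pm z_i, \pm z_i^*\}$ are in particular complex linear combinations; the square case does follow from the paper, because Theorem \ref{thm:main} and the Corollary of Section 4 make both notions admissible exactly when $2^{k-1} \mid n$ (and since $\mathcal{O}_z^H \mathcal{O}_z = (\sum_i |z_i|^2) I_n$ forces $p \ge n$, ``rectangular'' indeed means $p > n$); and you diagnose the obstruction in the right place --- for $p > n$ the $E_i$ are partial isometries, $E_0 E_0^H$ is a proper projection, so $G_i = E_0^H E_i$ is not unitary ($G_i^H G_i = E_i^H E_0 E_0^H E_i \neq I_n$ in general), no finite group $\mathcal{G}_{2k-1}$ acts, and complete reducibility, Lemma \ref{lem:unitary_equ}, and Schur rigidity are all unavailable.

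Two cautions on the speculative middle of your program. First, for $p > n$ the family $\{E_i\}$ maps $\mathbb{C}^n$ to $\mathbb{C}^p$, so it does not generate an algebra acting on a single space; your ``partial-isometry module'' would need a genuine definition (the compressions $G_i$ fail the Clifford relations, as just noted, so even the normalization step that launches the square proof breaks), and nothing analogous to Lemma \ref{lem:unitary_equ} is known to force a complex design into signed-monomial form --- note that the known nonsquare results the paper cites (e.g.\ the Wang--Xia bound $k/p \le 3/4$) are obtained by entirely different methods. Second, your suggested intermediate target, sizes with $p$ close to $n$ handled by ``embedding into or restricting from the square classification,'' is weaker than it sounds: deleting rows of a square COD destroys the defining identity, and extending a $[p,n,k]$ COD to a square one is essentially the open problem itself. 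So the honest summary is that your proposal proves exactly the trivial direction and the case $p = n$, and correctly identifies everything else as open; relative to the paper, which proves nothing about this statement, that is not a gap but a faithful account.
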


 The similar statement for HOD is not true, where $\mathbb{Z}$ should be replaced by Gaussian integers $\mathbb{Z}[i]$. 

We believe HOD (as well as COD as a special case), or equivalently, equations \eqref{equ:matrix_HOD}, is interesting in its own right. As we have already seen, the definition of square COD is especially nice for our proof of the structure theorem, due to the special role of unitary matrices in group representation.
Although HOD captures the definition of COD, the definition of COD still has its our merits, besides its applications in STBC. One reason may be Conjecture \ref{conj:COD}. When restricting our attention to the case $\mathbb{Z}$, there are some interesting results on COD over $\mathbb{Z}$ by combinatorial method \cite{AKP07}, \cite{AKM10}, \cite{Lia03}, \cite{LK12}, and might bring insight for COD $\mathbb{C}$.

Table \ref{tab:OD} summarizes three different definitions, OD (Orthogonal Design), HOD (Hermitian Orthogonal Design)
and COD (Complex Orthogonal Design), as well as their matrix equations characterization.

\begin{table}
\caption{Comparison between OD, HOD and COD}
\label{tab:OD}
\begin{tabular}{ p{3cm} | p{4cm} | p{4cm} }
Name & Definition & Matrices characterization \\
\hline
OD over $\mathbb{F}$ with size $[p, n, k]$, where
$\mathbb{F}$ is a ring or field
($\equiv$ Sum of Squares over $\mathbb{F}$ with size $[k, n, p]$) &
$\mathcal{O}_x$ is an $p\times n$ matrix with each entry
$\mathbb{F}$-linear combination of $x_1, \ldots, x_k$ such that
$\mathcal{O}_x^T \mathcal{O}_x = (x_1^2 + \ldots + x_k^2) I_n$.
& $\exists A_1, \ldots, A_k \in M_{p \times n}(\mathbb{F})$ such that
$A_i^T A_i = I_n \quad \forall i$, and
$A_i^T A_j + A_j^T A_i = 0 \quad \forall i \not= j$.\\
\hline
HOD over $\mathbb{F}$ with size $[p, n, k]$ ($\mathbb{F}$ is a
subring or subfield of $\mathbb{C}$) &
$\mathcal{O}_x$ is an $p \times n$ matrix with each entry $\mathbb{F}$-linear combination
of $x_1, \ldots, x_k$ such that $\mathcal{O}_x^H \mathcal{O}_x = (x_1^2 + \ldots + x_k^2) I_n$. &
$\exists A_1, \ldots, A_k \in M_{p \times n}(\mathbb{F})$ such that
$A_i^H A_i = I_n \quad \forall i$ and $A_i^H A_j + A_j^H A_i = 0 \quad \forall i \not= j$. \\
\hline
COD over $\mathbb{F}$ with size $[p, n, k]$, where $\mathbb{F}$ is a ring or field ($\equiv$ HOD over $\mathbb{F}$ with size $[p, n, 2k]$ when $\sqrt{-1} \in
\mathbb{F}$)  &
$\mathcal{O}_z$ is an $p \times n$ matrix with each entry $\mathbb{F}$-linear combination
of $z_1, \ldots, z_k$ and their conjugations $z_1^*, \ldots, z_k^*$ such that $\mathcal{O}_z^H \mathcal{O}_z = (|z_1|^2 + \ldots + |z_k|^2) I_n$. &
$\exists E_1, \ldots, E_k \in M_{p \times n}(\mathbb{F})$ and $\exists E_{k+1}, \ldots, E_{2k} \in M_{p \times n}(\mathbb{F}) \sqrt{-1}$ such that
$E_i^H E_i = I_n \quad \forall i$ and $E_i^H E_j + E_j^H E_i = 0 \quad \forall i \not= j$.
\end{tabular}
\end{table}

Since COD is a special case of HOD with $k$ even, it's natural to ask what is the structure for HOD when $n$ is odd, which is the result of the following
section.


\section{Square HOD}

In this section, we will prove the structure theorem for square HOD by the same group representation method. Note that $[n, n, k]$ square COD is equivalent to
$[n, n, 2k]$ square HOD, and we already proved the structure theorem for square COD in Section 3. It remains to prove the structure theorem for square
HOD when $k$ is odd.

Let's define the canonical form first. Let $\mathcal{H}_{2k}$ be $\mathcal{C}_k$ in Definition \ref{def:can} by replacing $z_i$ by $x_{2i-1} + x_{2i} \sqrt{-1}$
and replacing $z_i^*$ by $x_{2i-1} -  x_{2i} \sqrt{-1}$, that is,
$$
\mathcal{H}_{2} = \begin{pmatrix} x_1 +  x_2 \sqrt{-1} \end{pmatrix},
$$
and
$$
\mathcal{H}_{2k} = \begin{pmatrix}
\mathcal{H}_{2k-2} & (x_{2k-1} + x_{2k} \sqrt{-1}) I_{2^{k-2}} \\
- (x_{2k-1} - x_{2k} \sqrt{-1}) I_{2^{k-2}} & \mathcal{H}^H_{2k-2}\\
\end{pmatrix}
$$
for every $k > 1$.

Let $\mathcal{H}_{2k}^- = \mathcal{H}_{2k}(x_1, \ldots, x_{2k-1}, -x_{2k})$, which is obtained from $\mathcal{H}_{2k}$ by negating $x_{2k}$.
$\mathcal{H}_{2k}^\pm$ denotes either $\mathcal{H}_{2k}$ or $\mathcal{H}_{2k}^-$.

Let $\mathcal{H}_{2k-1} = \mathcal{H}_{2k}(x_1, \ldots, x_{2k-1}, 0)$, which is obtained from $\mathcal{H}_{2k}$ by replacing $x_{2k}$ by $0$.
It's easy to see our canonical form for HOD is well defined, that is $\mathcal{H}^\pm_{2k}$ and $\mathcal{H}_{2k-1}$ are square HODs with size $[2^{k-1}, 2^{k-1}, k]$.

\vspace{0.2cm}
Let's restate our main theorem in the context of square HOD.

\begin{theorem} Square HOD $\mathcal{O}_x$ with size $[n, n, 2k]$ exists if and only if
$$
2^{k-1} | n,
$$
and there exist $U, V \in U_n(\mathbb{C})$, and unique $n_1, n_2 \in \mathbb{N}$ with $n_1 +n_2 = n/2^{k-1}$ such that
\begin{eqnarray*}
\mathcal{O}_x & = & U  \diag(\underbrace{\mathcal{H}_{2k}, \ldots, \mathcal{H}_{2k}}_{n_1}, \underbrace{\mathcal{H}^{-}_{2k}, \ldots, \mathcal{H}^{-}_{2k}}_{n_2}) V \\
& = & U \left((I_{n_1} \otimes \mathcal{H}_{2k}) \oplus (I_{n_2} \otimes \mathcal{H}^-_{2k}) \right) V.
\end{eqnarray*}
\end{theorem}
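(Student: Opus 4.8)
The plan is to deduce the stated theorem from Theorem \ref{thm:main} through the obvious change of variables, rather than rerun the representation-theoretic machinery from scratch. First I would record the dictionary between the real and complex variables: setting $z_i = x_{2i-1} + x_{2i}\sqrt{-1}$, so that $z_i^* = x_{2i-1} - x_{2i}\sqrt{-1}$ and $|z_i|^2 = x_{2i-1}^2 + x_{2i}^2$, one has $\sum_{i=1}^k |z_i|^2 = \sum_{j=1}^{2k} x_j^2$. Since $x_{2i-1} = (z_i + z_i^*)/2$ and $x_{2i} = (z_i - z_i^*)/(2\sqrt{-1})$, a $\mathbb{C}$-linear combination of $x_1, \ldots, x_{2k}$ is the very same object as a $\mathbb{C}$-linear combination of $z_1, z_1^*, \ldots, z_k, z_k^*$. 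Hence this substitution is a bijection between square HODs $\mathcal{O}_x$ of size $[n,n,2k]$ and square CODs $\mathcal{O}_z$ of size $[n,n,k]$, under which the defining identity $\mathcal{O}^H\mathcal{O} = (\sum_j x_j^2)I_n$ turns into $\mathcal{O}^H\mathcal{O} = (\sum_i |z_i|^2)I_n$ and back. Because the substitution acts only on the variables, it commutes with left and right multiplication by constant unitaries $U, V$, so it carries unitary-equivalence classes to unitary-equivalence classes bijectively.

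Next I would check that the canonical forms correspond. By construction $\mathcal{H}_{2k}$ is exactly $\mathcal{C}_k$ after this substitution, and $\mathcal{H}_{2k}^-$, obtained by negating $x_{2k}$, corresponds to conjugating $z_k$, hence to $\mathcal{C}_k^-$. Consequently the canonical HOD $(I_{n_1} \otimes \mathcal{H}_{2k}) \oplus (I_{n_2} \otimes \mathcal{H}_{2k}^-)$ is the image of the canonical COD $(I_{n_1} \otimes \mathcal{C}_k) \oplus (I_{n_2} \otimes \mathcal{C}_k^-)$. Applying Theorem \ref{thm:main} to the associated COD $\mathcal{O}_z$ then yields at once that existence is equivalent to $2^{k-1} \mid n$, that there are unitaries $U, V$ and unique $n_1, n_2$ with $n_1 + n_2 = n/2^{k-1}$ realizing the canonical decomposition, and, translating back through the dictionary, precisely the claimed form for $\mathcal{O}_x$. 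Uniqueness of $(n_1, n_2)$ transports directly, since the correspondence is a bijection of equivalence classes.

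For completeness, and to honor the promise of the same group-representation method, I would also indicate the self-contained route. By the remark following Proposition \ref{prop:basic}, a square HOD of size $[n,n,2k]$ is equivalent to matrices $E_0, \ldots, E_{2k-1} \in U_n(\mathbb{C})$ with $E_i^H E_i = I_n$ and $E_i^H E_j + E_j^H E_i = 0$ for $i \neq j$, which is exactly the data extracted from a COD. Setting $G_i = E_0^H E_i$ produces a unitary nondegenerate representation $\rho$ of $\mathcal{G}_{2k-1}$, and the argument of Theorem \ref{thm:main} applies verbatim: decompose $\rho$ into irreducibles, observe that each summand must be nondegenerate and hence, by Lemma \ref{lem:groupstr}, of dimension $2^{k-1}$, choose the intertwiner $T$ unitary via Lemma \ref{lem:unitary_equ}, identify the two nondegenerate irreducibles with those induced by $\mathcal{H}_{2k}$ and $\mathcal{H}_{2k}^-$ (using Lemma \ref{lem:irr_rep} as in Section 3), and recover uniqueness of $n_1, n_2$ from the linear independence of characters.

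I do not expect a genuine obstacle for this even-variable theorem, since its content is essentially a relabeling of Theorem \ref{thm:main}; the only points requiring care are verifying that the change of variables is a true bijection at the level of designs and not merely of the defining identity, and that it respects the unitary-equivalence relation. The substantive difficulty lies instead in the odd-variable case $[n,n,2k-1]$ with canonical form $\mathcal{H}_{2k-1}$, which is not covered by the present statement: there one obtains $2k-1$ matrices and hence a representation of $\mathcal{G}_{2k-2}$, whose nondegenerate irreducible representations have a different structure and must be analyzed separately.
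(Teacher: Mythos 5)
Your proposal is correct and matches the paper exactly: the paper presents this theorem as a restatement of Theorem \ref{thm:main} under the equivalence between $[n,n,k]$ square COD and $[n,n,2k]$ square HOD (via Proposition \ref{prop:basic} and the substitution $z_i = x_{2i-1} + x_{2i}\sqrt{-1}$, under which $\mathcal{H}_{2k}^{\pm}$ is precisely the image of $\mathcal{C}_k^{\pm}$), offering no separate proof. You also correctly identify that the genuinely new work lies in the odd-variable case $[n,n,2k-1]$, which the paper handles separately via the group $\mathcal{G}_{2k-2}$.
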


Following is the structure theorem for square HOD with odd number of variables, which is very similar to
the even case except there is only one canonical form. The reason behind is that there is only one nondegenerate
irreducible representation for the corresponding group.

\begin{theorem}
\label{thm:SHODodd}

Square HOD $\mathcal{O}_x$ with size $[n, n, 2k-1]$ exists if and only if
$$
2^{k-1} | n,
$$
and there exist $U, V \in U_n(\mathbb{C})$ such that
\begin{eqnarray*}
\mathcal{O}_x & = & U \diag(\mathcal{H}_{2k-1}, \mathcal{H}_{2k-1}, \ldots, \mathcal{H}_{2k-1}) V \\
& = & U \left(I_{n/2^{k-1}} \otimes \mathcal{H}_{2k-1} \right) V.
\end{eqnarray*}
\end{theorem}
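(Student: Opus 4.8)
The plan is to follow the blueprint of the proof of Theorem~\ref{thm:main} essentially verbatim; the only genuinely new ingredient is the representation theory of the relevant group, which now has an \emph{even} number of Clifford generators. By the transpose-free matrix characterization of HOD (the analog of Proposition~\ref{prop:basic} recorded in Table~\ref{tab:OD}), a square HOD $\mathcal{O}_x$ of size $[n,n,2k-1]$ is the same data as $2k-1$ unitary matrices $E_0,\ldots,E_{2k-2}$ with $E_i^H E_j + E_j^H E_i = 0$ for $i\neq j$. Normalizing by $G_i = E_0^H E_i$ for $i=1,\ldots,2k-2$, exactly as before, produces $2k-2$ anticommuting skew-Hermitian unitaries, hence a unitary nondegenerate representation $\rho$ of the group $\mathcal{G}_{2k-2}$ generated by $g_1,\ldots,g_{2k-2}$ and $-1$ with $g_i^2=-1$ and $g_ig_j=-g_jg_i$. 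Thus everything reduces to classifying the nondegenerate irreducibles of $\mathcal{G}_{2k-2}$ and decomposing $\rho$.

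The crux is the analog of Lemma~\ref{lem:groupstr}: I claim $\mathcal{G}_{2k-2}$ has \emph{exactly one} nondegenerate irreducible representation, of dimension $2^{k-1}$. I would prove this with the two counting formulas. First, $|\mathcal{G}_{2k-2}| = 2^{2k-1}$, and its abelianization is $(\mathbb{Z}/2)^{2k-2}$ (the relations force $-1=1$ and $g_i^2=1$), so there are $2^{2k-2}$ one-dimensional representations, all degenerate. Next I would count conjugacy classes: writing a typical element as $\pm\prod_{i\in S}g_i$, conjugation by $g_j$ multiplies it by $(-1)^{|S\setminus\{j\}|}$, so a noncentral element is conjugate precisely to its own negative, giving a class of size $2$. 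Because $2k-2$ is even, the center is only $\{1,-1\}$ (the element $\prod_i g_i$ fails to be central exactly when the number of generators is even), so the class count is $2 + (2^{2k-1}-2)/2 = 2^{2k-2}+1$. Hence there are $2^{2k-2}+1$ irreducibles; subtracting the $2^{2k-2}$ linear ones leaves a single remaining irreducible, whose dimension $d$ satisfies $2^{2k-2}\cdot 1 + d^2 = 2^{2k-1}$, i.e. $d = 2^{k-1}$. This single irreducible can also be realized concretely as the common restriction to $\langle g_1,\ldots,g_{2k-2}\rangle$ of either $2^{k-1}$-dimensional irreducible of $\mathcal{G}_{2k-1}$, since by Lemma~\ref{lem:irr_rep} those two differ only in the action of $g_{2k-1}$. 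Having a single nondegenerate irreducible is exactly why the odd case has one canonical form rather than two.

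With this in hand the remainder is mechanical. The representation induced by $\mathcal{H}_{2k-1}$ is a nondegenerate $2^{k-1}$-dimensional representation, hence \emph{the} unique nondegenerate irreducible. Decomposing $\rho$ into irreducibles and repeating the argument of Theorem~\ref{thm:main} shows every summand is nondegenerate (a degenerate summand would force $\rho(1)=\rho(-1)$ on that block, hence $0$ there), so $\rho$ is a direct sum of $n/2^{k-1}$ copies of this one irreducible; in particular $2^{k-1}\mid n$. Applying Lemma~\ref{lem:unitary_equ} lets me take the intertwiner $T$ unitary, and expanding $\mathcal{O}_x = E_0 T^{-1}\bigl(I_{n/2^{k-1}}\otimes \mathcal{H}_{2k-1}\bigr)T$ and setting $U=E_0T^{-1}$, $V=T$ yields the stated normal form. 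No $n_1,n_2$ bookkeeping or character linear-independence argument is needed here, since only one irreducible type occurs.

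The main obstacle is the center and conjugacy-class computation for an even number of generators; getting that parity right is precisely what produces a single nondegenerate irreducible (and thus one canonical form) instead of the two found in the COD case. Everything downstream is a transcription of the proof of Theorem~\ref{thm:main}.
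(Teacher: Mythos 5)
Your proposal is correct and follows essentially the same route as the paper: the paper likewise reduces to a unitary nondegenerate representation of $\mathcal{G}_{2k-2}$, classifies its irreducibles by the same center-parity and conjugacy-class count (its Lemma~\ref{lem:G2k}, proved in the appendix exactly as you outline, with your abelianization argument replacing the paper's explicit construction of the $2^{2k-2}$ sign characters), identifies $\mathcal{H}_{2k-1}$ with the unique nondegenerate irreducible by the same degeneracy argument, and finishes via the decomposition and Lemma~\ref{lem:unitary_equ} with no $n_1,n_2$ bookkeeping.
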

\begin{remark} As a consequence, any two square HODs with size $[n, n, 2k-1]$ are equivalent, since there is only one canonical form.
\end{remark}
\begin{remark} If $n = 2^{k-1}$, $U, V$ are uniquely determined by Schur's lemma. Otherwise, $U, V$ are not unique.
\end{remark}

The proof is almost the same, and we will sketch the idea and omit the details. Square HOD $\mathcal{O}_x$ with size
$[n, n, 2k-1]$ is equivalent to the existence of $2k-1$ unitary matrices $E_0, \ldots, E_{2k-2} \in U_n(\mathbb{C})$ such that
$E_i^H E_j + E_j^H E_i = 0$ for all $i \not= j$. By the normalization trick, let $G_i = E_0^H E_i$ for $i = 1, 2, \ldots, 2k-2$.
It is easily checked that $G_i$ are unitary matrices satisfying $G_i^2 = -I$ and $G_i G_j = -G_j G_i$, which induces
a unitary representation of group $\mathcal{G}_{2k-2}$, where finite group $\mathcal{G}_{2k-2}$ is generated by $g_1, \ldots, g_{2k-2}, 1, -1$
satisfying $g_i^2 = -1$ and $g_i g_j = -g_j g_i$ for all $i \not= j$. It turns out this group has $2^{2k-2}+1$ irreducible representations,
one is $2^{k-1}$ dimensional, $2^{2k-2}$ are one dimensional (see Lemma \ref{lem:G2k}).

There is a proof of the following lemma in \cite{TH02}. For completeness, we reproduce the proof in the appendix.

\begin{lemma} \cite{TH02} Group $\mathcal{G}_{2k}$ has $2^{2k}+1$ irreducible representations. One is $2^k$ dimensional, and $2^{2k}$ are one dimensional.
\label{lem:G2k}
\end{lemma}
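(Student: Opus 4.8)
The plan is to invoke the two counting formulas recalled in Section~2: the number of irreducible representations equals the number of conjugacy classes, and the sum of the squares of their dimensions equals $|\mathcal{G}_{2k}|$. First I would record that $\mathcal{G}_{2k} = \{\pm\prod_{i\in S}g_i : S\subseteq\{1,\ldots,2k\}\}$ has order $2^{2k+1}$. The key structural observation is that any two elements commute up to sign: since $-1$ is central and the generators satisfy $g_ig_j=-g_jg_i$, an easy induction on word length gives $ab=\epsilon\,ba$ with $\epsilon\in\{\pm1\}$ for all $a,b\in\mathcal{G}_{2k}$. Consequently $aba^{-1}=\epsilon\, b\in\{b,-b\}$ for every $a$, so each conjugacy class is contained in $\{g,-g\}$ and hence has size $1$ or $2$, being a singleton exactly when $g$ is central.

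The crux is therefore to determine $Z(\mathcal{G}_{2k})$. For a generator $g_j$ and a basis element $w=\prod_{i\in S}g_i$ (indices in increasing order), I would compute the sign in $g_jw=\epsilon\,wg_j$ by counting the transpositions needed to bring $g_j$ into its sorted position (merging $g_jg_j=-1$ when $j\in S$); one finds $g_j$ commutes with $w$ if and only if ($j\notin S$ and $|S|$ even) or ($j\in S$ and $|S|$ odd). For any nonempty $S$ this fails for a suitable $j$: take $j\in S$ when $|S|$ is even, and take $j\notin S$ (which exists because $|S|\le 2k-1<2k$) when $|S|$ is odd. Hence no nonempty product is central, so $Z(\mathcal{G}_{2k})=\{\pm1\}$. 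This is the step I expect to require the most care, being pure parity bookkeeping; note that the evenness of the number $2k$ of generators is exactly what keeps the center small, in contrast to $\mathcal{G}_{2k-1}$, where the extra central element $g_1\cdots g_{2k-1}$ produces a second high-dimensional irreducible representation.

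With the center pinned down the count is immediate: the two central elements $\pm1$ give two singleton classes, while the remaining $2^{2k+1}-2$ elements pair into $(2^{2k+1}-2)/2=2^{2k}-1$ classes of size $2$, for a total of $2^{2k}+1$ conjugacy classes, hence $2^{2k}+1$ irreducible representations. For the dimensions I would compute the abelianization: from $g_i^{-1}=-g_i$ one gets $[g_i,g_j]=(g_ig_j)^2=-1$, so $[\mathcal{G}_{2k},\mathcal{G}_{2k}]=\{\pm1\}$ and $\mathcal{G}_{2k}/\{\pm1\}\cong(\mathbb{Z}/2)^{2k}$, whose $2^{2k}$ characters are precisely the one-dimensional representations. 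Exactly one irreducible representation of dimension $d>1$ then remains, and the sum-of-squares formula $2^{2k}\cdot 1+d^2=2^{2k+1}$ forces $d^2=2^{2k}$, that is $d=2^k$, which completes the proof.
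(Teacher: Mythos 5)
Your proposal is correct and takes essentially the same route as the paper: both invoke the two counting formulas, count $2^{2k}+1$ conjugacy classes by showing every non-central class is $\{w,-w\}$ (your determination of $Z(\mathcal{G}_{2k})=\{\pm 1\}$ is the same parity bookkeeping as the paper's three-case conjugation argument), exhibit $2^{2k}$ one-dimensional representations, and pin the remaining dimension at $2^k$ via the sum-of-squares formula. Your abelianization step $[\mathcal{G}_{2k},\mathcal{G}_{2k}]=\{\pm 1\}$ is a slightly cleaner packaging of the paper's explicit construction of the linear characters, since it also certifies that there are no one-dimensional representations beyond the $2^{2k}$ constructed, but it is not a genuinely different method.
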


To finish the proof, we need the following lemma, which says our canonical form induces the nondegenerate irreducible of dimension $2^{k-1}$ representation
of group $\mathcal{G}_{2k-2}$.

\begin{lemma} Square HOD $\mathcal{H}_{2k-1}$ induces a $2^{k-1}$ dimensional irreducible representation of group $\mathcal{G}_{2k-2}$.
\end{lemma}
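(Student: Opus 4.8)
The plan is to transcribe, nearly verbatim, the argument that earlier established the irreducibility of the representations induced by $\mathcal{C}_k$ and $\mathcal{C}^-_k$, now in the odd setting where the governing group is $\mathcal{G}_{2k-2}$. First I would make the representation explicit by applying the normalization already set up for square HOD of size $[n,n,2k-1]$: the $2k-1$ coefficient matrices $E_0,\ldots,E_{2k-2}$ of $\mathcal{H}_{2k-1}$ are unitary and satisfy $E_i^H E_j + E_j^H E_i = 0$ for $i\neq j$, so putting $G_i = E_0^H E_i$ for $i=1,\ldots,2k-2$ yields unitary matrices with $G_i^2 = -I$ and $G_iG_j = -G_jG_i$. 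These determine a unitary representation $\rho$ of $\mathcal{G}_{2k-2}$ by $\rho(g_i)=G_i$ and $\rho(-1)=-I$. Since $\mathcal{H}_{2k-1}$ is a $2^{k-1}\times 2^{k-1}$ matrix, $\dim\rho = 2^{k-1}$ is immediate, so the only thing left to prove is irreducibility.

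Next I would record two facts that drive the dimension-count argument. First, $\rho$ is nondegenerate: $\rho(-1) = \rho(g_i^2) = G_i^2 = -I \neq I = \rho(1)$. Second, every one-dimensional representation of $\mathcal{G}_{2k-2}$ is degenerate. Indeed, in a one-dimensional representation each generator $g_i$ maps to a nonzero scalar $\lambda_i$, and applying the representation to the relation $g_ig_j = (-1)g_jg_i$ gives $\lambda_i\lambda_j = \chi(-1)\lambda_j\lambda_i$; since scalars commute and $\lambda_i\lambda_j \neq 0$, this forces $\chi(-1) = 1 = \chi(1)$.

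Finally I would invoke the group-structure lemma. By Lemma \ref{lem:G2k} applied to $\mathcal{G}_{2k-2}$, the only irreducible representation of dimension greater than one is the unique $2^{k-1}$-dimensional one, all remaining irreducibles being one-dimensional. Decomposing $\rho$ into irreducibles and matching dimensions ($\dim\rho = 2^{k-1}$) leaves exactly two possibilities: either $\rho$ is that $2^{k-1}$-dimensional irreducible, or $\rho$ is a direct sum of $2^{k-1}$ one-dimensional representations. The second option is ruled out because it would force $\rho(-1)$ to be the identity, contradicting nondegeneracy. Hence $\rho$ is irreducible. I do not expect a serious obstacle here; the argument is routine once Lemma \ref{lem:G2k} is available, and the only points requiring care are the verification that one-dimensional representations are degenerate and the index bookkeeping that $2k-1$ variables produce $2k-2$ anticommuting generators, hence a representation of $\mathcal{G}_{2k-2}$ rather than of $\mathcal{G}_{2k-1}$ (with the degenerate base case $k=1$ being trivial).
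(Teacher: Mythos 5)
Your proposal is correct and matches the paper's own proof essentially step for step: both normalize the coefficient matrices of $\mathcal{H}_{2k-1}$ via $G_i = E_0^H E_i$ to obtain a unitary representation $\rho$ of $\mathcal{G}_{2k-2}$, invoke Lemma \ref{lem:G2k} to reduce to the dichotomy ``$\rho$ is the unique $2^{k-1}$-dimensional irreducible, or a direct sum of one-dimensional representations,'' and kill the second case by showing every one-dimensional representation is degenerate (via the relation $g_ig_j = -g_jg_i$) while $\rho(-1) = -I$. Your extra remarks---the explicit nondegeneracy check and the bookkeeping that $2k-1$ variables yield $2k-2$ anticommuting generators---are points the paper leaves implicit, but they do not change the argument.
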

\begin{proof} As usual, write
$$
\mathcal{H}_{2k-1} = \sum_{i=1}^{2k-1} E_{i-1} x_i,
$$
and let $G_i = E_0^H E_i$ for $i = 1, \ldots, 2k-2$. We have already seen map $\rho: \mathcal{G}_{2k-2} \to U_n(\mathbb{C})$, defined by $\rho(g_i) = G_i$, is a unitary representation of group $\mathcal{G}_{2k-2}$ since $G_i^2 = -1$ and
$G_i G_j = -G_j G_i$ for all $i \not= j$.

By Lemma \ref{lem:G2k}, there are two possibilities: either $\rho$ is the irreducible representation of dimension $2^{k-1}$, or $\rho$ is a direct sum
of one dimensional representations. Observe that one dimensional representations are all degenerate, because assume $\pi : \mathcal{G}_{2k-2} \to \mathbb{C}$
is a one dimensional, then $\pi(g_i) \pi(g_j) = \pi(g_i g_j) = \pi(-g_j g_i) = \pi(-1) \pi(g_j) \pi(g_i)$, which implies $\pi(-1) = 1$.
Thus, if $\rho$ is a direct sum of one dimensional representations, then $\rho$ is also degenerate, which is a contradiction! Therefore, $\rho$ must
be the irreducible one.
\end{proof}

Let's go back to the proof of Theorem \ref{thm:SHODodd}.
Since square HOD $\mathcal{O}_x$ with size $[n, n, 2k-1]$ induces an unitary representation $\rho$ of group $\mathcal{G}_{2k-2}$, by classical result in representation theory, it is a direct sum of irreducible ones. Since $\rho$ is unregenerate, i.e., $\rho(-1) = -I$, and the only unregenerate irreducible representation of group $\mathcal{G}_{2k-2}$ is of $2^{k-1}$ dimensional, $n$, the dimensional of $\rho$,  must be a multiple of $2^{k-1}$, which is the dimension of
the irreducible representation induced by canonical form $\mathcal{H}_{2k-1}$, which proves the first part of the theorem. Following the same argument in the proof of our main theorem, that writing $\mathcal{O}_x$ explicitly as a direct sum of irreducible ones induced by the canonical form and apply Lemma \ref{lem:unitary_equ}, Theorem \ref{thm:SHODodd} is proved.

\vspace{0.2cm}
As a consequence of the Theorem \ref{thm:SHODodd}, square HOD with size $[n, n, 2k-1]$
can be ``extended'' to a square HOD with size $[n, n, 2k]$, which is equivalent to a square COD with size $[n, n, k]$ as
we have already seen.

\begin{corollary} Assume $\mathcal{O}_x$ is a square HOD with size $[n, n, 2k-1]$. There exists a matrix $\mathcal{L}_x$ where
each entry is a $\mathbb{C}$ linear combination of $x_{2k}$ such that
$
\mathcal{O}_x + \mathcal{L}_x
$
is a square HOD with size $[n, n, 2k]$, or equivalently, a square COD with size $[n, n, k]$ by setting $x_{2i-1} = (z_i + z^*_{i})/2$
and $x_{2i} = (z_i - z^*_i) / (2\sqrt{-1})$.
\end{corollary}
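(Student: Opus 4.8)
The plan is to bypass the given design entirely and work with its canonical form, since the extension is transparent there. First I would invoke Theorem~\ref{thm:SHODodd} to write
$$\mathcal{O}_x = U\left(I_{n/2^{k-1}}\otimes\mathcal{H}_{2k-1}\right)V$$
for suitable $U,V\in U_n(\mathbb{C})$. The key observation is that, by definition, $\mathcal{H}_{2k-1}=\mathcal{H}_{2k}(x_1,\ldots,x_{2k-1},0)$, so $\mathcal{H}_{2k-1}$ is literally $\mathcal{H}_{2k}$ with $x_{2k}$ switched off. Hence the matrix (for $k>1$)
$$\mathcal{M}_x := \mathcal{H}_{2k}-\mathcal{H}_{2k-1} = \begin{pmatrix}0 & x_{2k}\sqrt{-1}\,I_{2^{k-2}} \\ x_{2k}\sqrt{-1}\,I_{2^{k-2}} & 0\end{pmatrix}$$
collects exactly the $x_{2k}$-dependent part of $\mathcal{H}_{2k}$, and every entry of $\mathcal{M}_x$ is a $\mathbb{C}$-linear combination of the single variable $x_{2k}$ (the case $k=1$ is the direct statement $\mathcal{M}_x=(x_2\sqrt{-1})$).

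Second, I would transport this correction term through the unitary equivalence by setting $\mathcal{L}_x := U\left(I_{n/2^{k-1}}\otimes\mathcal{M}_x\right)V$. Since $U$ and $V$ are constant matrices and $I_{n/2^{k-1}}\otimes\mathcal{M}_x$ has entries linear in $x_{2k}$, each entry of $\mathcal{L}_x$ is again a $\mathbb{C}$-linear combination of $x_{2k}$, which is precisely the required form. Additivity of the Kronecker product then yields
$$\mathcal{O}_x+\mathcal{L}_x = U\left(I_{n/2^{k-1}}\otimes(\mathcal{H}_{2k-1}+\mathcal{M}_x)\right)V = U\left(I_{n/2^{k-1}}\otimes\mathcal{H}_{2k}\right)V.$$

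Finally I would check that the right-hand side is a square HOD of size $[n,n,2k]$: $\mathcal{H}_{2k}$ is itself such a design on $2^{k-1}$ rows (as verified when the canonical forms were introduced), so $I_{n/2^{k-1}}\otimes\mathcal{H}_{2k}$ satisfies the defining identity $(\cdot)^H(\cdot)=(x_1^2+\cdots+x_{2k}^2)I_n$, and conjugating by the unitaries $U,V$ preserves it by exactly the computation used earlier for the equivalence of square CODs. The promised COD interpretation is then automatic from the substitution $x_{2i-1}=(z_i+z_i^*)/2$, $x_{2i}=(z_i-z_i^*)/(2\sqrt{-1})$ together with the recorded HOD--COD correspondence. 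I do not expect a genuine obstacle: all the real difficulty is already absorbed into Theorem~\ref{thm:SHODodd}, and the only point deserving a word of care is confirming that $\mathcal{L}_x$ involves $x_{2k}$ and no other variable, which is immediate once the $x_{2k}$-part of $\mathcal{H}_{2k}$ is isolated as above.
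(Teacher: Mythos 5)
Your proposal is correct and is essentially the paper's own proof: both reduce to the canonical form via Theorem~\ref{thm:SHODodd}, isolate the $x_{2k}$-part of $\mathcal{H}_{2k}$ (your $\mathcal{M}_x = \mathcal{H}_{2k} - \mathcal{H}_{2k-1}$, which you compute correctly), and transport it through the same unitaries $U,V$. The only cosmetic difference is that the paper writes $\mathcal{L}_x$ with an independent sign choice $\pm x_{2k}$ in each block, yielding all extensions $\mathcal{H}^{\pm}_{2k}$ rather than just $\mathcal{H}_{2k}$, which matters only for the subsequent remark on uniqueness, not for the existence claim you were asked to prove.
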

\begin{proof} By Theorem \ref{thm:SHODodd}, $\mathcal{O}_x$ is equivalent to the canonical form $\mathcal{H}_{2k-1}$, that is,
$$
\mathcal{O}_x = U \left( \bigoplus_{i=1}^{n/2^{k-1}}\mathcal{H}_{2k-1} \right) V,
$$
where $U, V$ are unitary matrices of size $n$.

Let
\begin{equation}
\label{equ:defL}
\mathcal{L}_x = U \left( \bigoplus_{i=1}^{n/2^{k-1}}\mathcal{H}_{2k-1}(0, \ldots, 0, \pm x_{2k}) \right) V,
\end{equation}
 where $\mathcal{H}_{2k-1}(0, \ldots, 0, \pm x_{2k})$ denotes the matrix obtained from $\mathcal{H}_{2k-1}$
by replacing $x_1, \ldots, x_{2k-1}$ by $0$, and possibly replacing $x_{2k}$ by $-x_{2k}$.
Then
\begin{eqnarray*}
\mathcal{O}_x + \mathcal{L}_x & =  & U \left(\bigoplus_{i=1}^{n/2^{k-1}} \mathcal{H}_{2k-1} + \mathcal{H}_{2k}(0, \ldots, 0, \pm x_{2k}) \right) V \\
& = & U \left(\bigoplus_{i=1}^{n/2^{k-1}}\mathcal{H}_{2k}(x_1, \ldots, x_{2k-1}, 0) + \mathcal{H}_{2k}(0, \ldots, 0, \pm x_{2k})\right) V \\
& = & U \left(\bigoplus_{i=1}^{n/2^{k-1}}\mathcal{H}^\pm_{2k}\right) V,
\end{eqnarray*}
is $[n, n, 2k]$ square HOD, where $\mathcal{H}_{2k-1} = \mathcal{H}_{2k}(x_1, \ldots, x_{2k-1}, 0)$ by definition.
\end{proof}
\begin{remark}
$\mathcal{L}_x$ in \eqref{equ:defL} are the only possibilities to ``extend'' $\mathcal{O}_z$, since $\mathcal{G}_{2k-1}$ has only
two irreducible nondegenerate representations, and all nondegenerate representations are the direct sum of these two.
\end{remark}


\section{Conclusion and Open Problems}

Square CODs can be completely understood by group representation approach. The high level idea is quite clear and 
general: if you are interested in some mysterious object $\mathcal{O}$, assume $\mathcal{O}$ exists, then it will induce a representation
of some group with certain properties. By studying the group, you will understand its representations, and thus understand the mysterious object $\mathcal{O}$
hopefully. For square CODs, it seems that  everything is clear now.

For further research, it's tempting to apply the group representation approach for the nonsquare complex orthogonal
if possible. Another open problem is to apply similar approach to quasiorthogonal designs, that is, only some given pairs
of columns are orthogonal.

\section*{Acknowledgement}
We would like to thank Mateusz Wasilewski for giving the proof of Lemma \ref{lem:unitary_equ} on MathOverflow.

\section*{Appendix}
\begin{lemma} \cite{TH02}
For group $\mathcal{G}_{2k-1}$, if $\rho$ is an irreducible representation with dimension greater than $1$, then
$\pi$ defined by $\pi(g_i) = \rho(g_i)$ for all $i \not= 2k-1$, and $\pi(g_{2k-1}) = -\rho(g_{2k-1})$ is another $2^{k-1}$ dimensional
irreducible
representation.
\end{lemma}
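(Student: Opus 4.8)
The plan is to verify, in turn, that $\pi$ is a homomorphism, that it is irreducible of the same dimension as $\rho$, and that it is inequivalent to $\rho$. First I would record that any irreducible $\rho$ with $\dim\rho>1$ must be nondegenerate, i.e. $\rho(-1)=-I$: since $-1$ is central of order two, Schur's lemma forces $\rho(-1)=\pm I$, and if $\rho(-1)=I$ then $\rho(g_i)\rho(g_j)=\rho(g_ig_j)=\rho(-g_jg_i)=\rho(-1)\rho(g_j)\rho(g_i)=\rho(g_j)\rho(g_i)$, so the image is abelian and $\rho$ would be one-dimensional, contrary to hypothesis. Hence $\rho(g_i)^2=\rho(-1)=-I$ for every $i$.

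Setting $\pi(-1)=-I$ and $\pi(g_i)=\pm\rho(g_i)$, I would then check that $\pi$ respects the two defining families of relations. The squares are immediate, since $\pi(g_i)^2=\rho(g_i)^2=-I$ for $i\neq 2k-1$ while $(-\rho(g_{2k-1}))^2=\rho(g_{2k-1})^2=-I$; for the anticommutation $\pi(g_i)\pi(g_j)=-\pi(g_j)\pi(g_i)$, a sign flip on the single generator $g_{2k-1}$ either leaves the relation untouched (when neither index is $2k-1$) or multiplies both sides by $-1$ (when exactly one index equals $2k-1$), so the relation survives and $\pi$ is a genuine representation. Irreducibility is then inherited from $\rho$ with no further work: a subspace is invariant under $\pi(g_i)=\pm\rho(g_i)$ exactly when it is invariant under $\rho(g_i)$, because scaling a subspace by $-1$ returns the same subspace; hence $\rho$ and $\pi$ have identical invariant subspaces, and $\dim\pi=\dim\rho$, which equals $2^{k-1}$ for a nondegenerate irreducible by Lemma \ref{lem:groupstr}.

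The substantive step is to show $\pi$ is genuinely a new representation, i.e. $\pi\not\cong\rho$. The naive idea of comparing characters at $g_{2k-1}$ fails: using $g_j^{-1}=-g_j$ one computes $g_j g_{2k-1} g_j^{-1}=-g_{2k-1}$ for any $j\neq 2k-1$, so $g_{2k-1}$ and $-g_{2k-1}$ are conjugate, which forces $\chi_\rho(g_{2k-1})=\chi_\rho(-g_{2k-1})=-\chi_\rho(g_{2k-1})=0$, and likewise for $\pi$; both characters vanish there. Instead I would exploit the central element $\omega=g_1g_2\cdots g_{2k-1}$. Because the number of generators $2k-1$ is odd, moving any $g_j$ across $\omega$ costs the sign $(-1)^{2k-2j}=1$, so $\omega$ commutes with every generator and lies in the center of $\mathcal{G}_{2k-1}$; by Schur's lemma $\rho(\omega)=\lambda I$ with $\lambda\neq 0$. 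Since $\pi$ negates exactly one of the $2k-1$ factors, $\pi(\omega)=-\rho(\omega)=-\lambda I$. As the scalar by which a central element acts is constant on an equivalence class, the inequality $\rho(\omega)=\lambda I\neq -\lambda I=\pi(\omega)$ certifies $\pi\not\cong\rho$, completing the argument.

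I expect the only real obstacle to be isolating this distinguishing invariant. The obvious character computation at $g_{2k-1}$ is inconclusive precisely because $g_{2k-1}$ is conjugate to its negative, and it is the centrality of the total product $\omega$, together with the sign it acquires under the twist $g_{2k-1}\mapsto -g_{2k-1}$, that cleanly separates $\pi$ from $\rho$.
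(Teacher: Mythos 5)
Your proof is correct and uses essentially the same argument as the paper: both rest on the centrality of $\omega = g_1 g_2 \cdots g_{2k-1}$, Schur's lemma giving $\rho(\omega) = \lambda I$, and the fact that the twist negates exactly one factor of $\omega$, so that $\pi(\omega) = -\lambda I$ is incompatible with any equivalence $\pi = T^{-1}\rho T$. Your version merely repackages the paper's contradiction (which substitutes $\rho(g_{2k-1}) = -\lambda^{-1}\rho(g_1 \cdots g_{2k-2})$ into the assumed intertwining relation) as a direct comparison of the central scalars, and adds the routine checks --- that $\pi$ respects the defining relations, inherits irreducibility, and that $\rho(-1) = -I$ --- which the paper leaves implicit; your exponent $(-1)^{2k-2j}$ should be $(-1)^{2k-2}$ (the number of anticommuting factors), but both are even, so the conclusion is unaffected.
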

\begin{proof} First, notice that $\prod_{i=1}^{2k-1} g_i$ is a central element of the group. Because for any $S \subseteq \{1, \ldots, 2k-1\}$,
where $S= \{s_1, \ldots, s_l \}$,
\begin{eqnarray*}
\left( \prod_{i=1}^{2k-1} g_i \right) \left( \prod_{i \in S} g_i\right) & = &
(-1)^{2k-2} g_{s_1} \left( \prod_{i=1}^{2k-1} g_i \right) \left( \prod_{i \in S \setminus \{s_1\}} g_i\right)\\
& = &  g_{s_1} g_{s_2} \left( \prod_{i=1}^{2k-1} g_i \right) \left( \prod_{i \in S \setminus \{s_1, s_2\}} g_i\right) \\
& = &  \left( \prod_{i \in S} g_i\right) \left( \prod_{i=1}^{2k-1} g_i \right).
\end{eqnarray*}
By Schur's lemma, since $\rho$ is irreducible, $\rho(\prod_{i=1}^{2k-1} g_i) = \lambda I_n$, where $\lambda \in \mathbb{C}$, which
implies
$$
\rho(g_1 \ldots g_{2k-2}) = \rho(-g_{2k-1}^2) \rho(g_1 \ldots g_{2k-2}) = \rho(-g_{2k-1}) \rho(g_1 \ldots g_{2k-1}) = -\rho(g_{2k-1}) \lambda I_n,
$$
i.e.,
\begin{equation}
\label{equ:repg1}
\rho(g_{2k-1}) = -\frac{1}{\lambda} \rho(g_1 \ldots g_{2k-2}).
\end{equation}
Assume to the contrary that there exists a similarity transformation $T \in GL_n(\mathbb{C})$ such that $\pi = T^{-1} \rho T$. By
the definition of $\pi$, we have
$$
\pi(g_i) = T^{-1} \rho(g_i) T
$$
for all $i = 1, \ldots, 2k-2$. And
\begin{eqnarray*}
\pi(g_{2k-1}) & = & T^{-1} \rho(g_{2k-1}) T \\
& = & T^{-1} (-\frac{1}{\lambda} \rho(g_1) \rho(g_2) \ldots \rho(g_{2k-2})) T \\
& = & -\frac{1}{\lambda} (T^{-1} \rho(g_1) T) (T^{-1} \rho(g_2) T) \ldots (T^{-1} \rho(g_{2k-2}) T)\\
& = & -\frac{1}{\lambda} \pi(g_1) \ldots \pi(g_{2k-2})\\
& = & \rho(g_{2k-1}),
\end{eqnarray*}
which is a contradiction with the definition of $\pi$!
\end{proof}

\begin{lemma}
\cite{TH02}
Group $\mathcal{G}_{2k-1}$ has $2^{2k-1}+2$ irreducible representations. Two are $2^{k-1}$ dimensional,
$2^{2k-1}$ are 1-dimensional.
\end{lemma}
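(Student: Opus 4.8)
The plan is to read everything off the two standard counting formulas recalled in Section~2: the number of irreducible representations equals the number of conjugacy classes, and the sum of the squares of their dimensions equals $|\mathcal{G}_{2k-1}| = 2^{2k}$. Accordingly I would (i) count the one-dimensional representations via the abelianization, (ii) count all conjugacy classes to get the total number of irreducibles, and (iii) combine these with the sum-of-squares identity, together with Lemma~\ref{lem:irr_rep}, to pin down the remaining dimensions.

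For step (i), observe that for $i \neq j$ the relation $g_i g_j = -g_j g_i$ together with $g_i^{-1} = -g_i$ gives $[g_i, g_j] = g_i g_j g_i^{-1} g_j^{-1} = -1$. Hence the commutator subgroup is exactly $\{1, -1\}$, and the abelianization $\mathcal{G}_{2k-1}/\{\pm 1\}$ is generated by $2k-1$ commuting involutions, so it is isomorphic to $(\mathbb{Z}/2\mathbb{Z})^{2k-1}$, of order $2^{2k-1}$. Since one-dimensional representations factor through the abelianization and an abelian group has exactly $|\mathcal{G}^{\mathrm{ab}}|$ of them, there are precisely $2^{2k-1}$ one-dimensional representations, each sending $-1 \mapsto 1$ and hence degenerate.

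For step (ii), the key observation is that, writing elements as signed products of distinct generators, a direct sign count shows any two elements $x, y$ satisfy $xy = \pm yx$; thus conjugation always acts by $x \mapsto \pm x$, and the conjugacy class of a non-central element $g$ is exactly $\{g, -g\}$, of size $2$. It then remains to identify the centre: an element $\pm \prod_{i \in S} g_i$ is central iff $\prod_{i \in S} g_i$ commutes with every $g_j$, and the same sign bookkeeping used for $\prod_{i=1}^{2k-1} g_i$ in the appendix forces $S = \emptyset$ or $S = \{1, \ldots, 2k-1\}$. Hence the centre is $Z = \{1, -1, c, -c\}$ with $c = \prod_{i=1}^{2k-1} g_i$, of order $4$, yielding $4$ singleton classes and $(2^{2k} - 4)/2 = 2^{2k-1} - 2$ classes of size $2$, for a total of $2^{2k-1} + 2$ conjugacy classes, hence $2^{2k-1} + 2$ irreducible representations.

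Subtracting the $2^{2k-1}$ one-dimensional ones leaves exactly two irreducibles of dimension greater than $1$, and this is where Lemma~\ref{lem:irr_rep} does the real work: applied to one of them, say $\rho$, it produces an inequivalent irreducible $\pi$ of the same dimension (only the image of $g_{2k-1}$ is negated), which must be the other high-dimensional irreducible, so the two share a common dimension $d$. The sum-of-squares identity then reads $2^{2k-1}\cdot 1^2 + 2d^2 = 2^{2k}$, forcing $d^2 = 2^{2k-2}$ and $d = 2^{k-1}$. I expect the main obstacle to be exactly this last point: the sum-of-squares relation alone only gives $d_1^2 + d_2^2 = 2^{2k-1}$, which a priori need not force $d_1 = d_2$, so invoking Lemma~\ref{lem:irr_rep} (or, alternatively, the fact that a power of $2$ with odd exponent has an essentially unique representation as a sum of two positive squares) is essential. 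A minor caveat is the case $k = 1$, where $-1$ is no longer a commutator and $\mathcal{G}_1 \cong \mathbb{Z}/4\mathbb{Z}$ is checked directly; the argument above is intended for $k \geq 2$.
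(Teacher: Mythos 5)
Your proposal is correct and follows essentially the same route as the paper's appendix proof: count the conjugacy classes (center $\{\pm 1, \pm\prod_i g_i\}$ of order $4$ plus the size-two classes $\{g,-g\}$, giving $2^{2k-1}+2$), identify the $2^{2k-1}$ one-dimensional representations, and combine the sum-of-squares formula with Lemma~\ref{lem:irr_rep} to force the two remaining irreducibles to share dimension $2^{k-1}$. The only cosmetic differences are that you obtain the one-dimensional count via the abelianization $\mathcal{G}_{2k-1}/\{\pm 1\} \cong (\mathbb{Z}/2\mathbb{Z})^{2k-1}$ where the paper constructs the characters explicitly, and you flag the degenerate case $k=1$, which the paper leaves implicit.
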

\begin{proof} First, let's construct $2^{2k-1}$ nonequivalent one dimensional representations. For any $J \subseteq \{1, \ldots, 2k-1\}$,
let $\rho(1) = \rho(-1) = 1$, $\rho(g_i) = 1$ if $i \not\in J$ and $\rho(g_j) = -1$. It's easy to see $\rho$ is a representation
of $\mathcal{G}_{2k-1}$, and they are nonequivalent.

Then, apply the counting formula, e.g. section 2.4 in \cite{Serre},
$$
|\mathcal{G}_{2k-1}| = \sum_{i=1}^{l} n_i^2,
$$
where $n_i$ is the dimension of each irreducible representations, and $l$ equals the number of conjugacy classes. We claim $l = 2^{2k-1}+2$,
which will be proved at the end of this proof. By the existence of $2^{2k-1}$ one dimensional representations, and Lemma \ref{lem:irr_rep}, we have
$$2n_1^2 = 2^{2k-1},$$
which implies $n_1 = 2^{k-1}$, i.e., there exist two irreducible two representations with dimension $2^{k-1}$.

Finally, we need to prove our claim: there are $2^{2k-1}+2$ conjugacy classes. If an element commute with all
elements, then itself forms a conjugacy class; otherwise, itself and its negation forms a conjugacy class. For
$\pm 1, \pm \prod_{i=1}^{2k-1}{g_i}$, they belong to the former case; for any $\emptyset \not= S \subsetneq [2k-1]$, with $S=\{s_1, \ldots, s_m\}$, there exists $s'_m \not\in S$, it's easy to verify $\pm \prod_{i \in S} g_i$ is anti-commuting with $(\prod_{i=1}^{m-1} s_i) s'_m$. Therefore, there are $2^{2k-1}+2$ conjugacy classes,
which completes our proof.
\end{proof}

\begin{lemma} \cite{TH02} Group $\mathcal{G}_{2k}$ has $2^{2k}+1$ irreducible representations. One is $2^k$ dimensional, and $2^{2k}$ are one dimensional.
\end{lemma}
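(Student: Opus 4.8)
The plan is to mirror the proof of Lemma~\ref{lem:groupstr} for $\mathcal{G}_{2k-1}$, tracking carefully the single place where the parity of the number of generators changes the conclusion. First I would record the group order: the elements are $\{\pm\prod_{i\in S}g_i : S\subseteq\{1,\ldots,2k\}\}$, so $|\mathcal{G}_{2k}|=2^{2k+1}$. The target counts are consistent with the class equation, since $2^{2k}\cdot 1^2 + 1\cdot(2^k)^2 = 2^{2k}+2^{2k}=2^{2k+1}$.

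Next I would produce the one-dimensional representations. For a homomorphism $\rho:\mathcal{G}_{2k}\to\mathbb{C}^\times$, the relation $g_ig_j=-g_jg_i$ forces $\rho(g_i)\rho(g_j)=\rho(-1)\rho(g_j)\rho(g_i)$; since scalars commute this gives $\rho(-1)=1$, and then $\rho(g_i)^2=\rho(g_i^2)=\rho(-1)=1$, so each $\rho(g_i)\in\{\pm1\}$. Every one of the $2^{2k}$ sign choices $(\rho(g_1),\ldots,\rho(g_{2k}))\in\{\pm1\}^{2k}$ extends to a genuine, pairwise nonequivalent one-dimensional representation, yielding exactly $2^{2k}$ of them.

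The heart of the argument is counting conjugacy classes. Since $g_j^{-1}=-g_j$, conjugating $z=\prod_{i\in S}g_i$ by $g_j$ gives $g_jzg_j^{-1}=\epsilon_{j,S}\,z$ for a sign $\epsilon_{j,S}\in\{\pm1\}$ obtained by commuting $g_j$ past $z$; a direct count of anticommutations shows $\epsilon_{j,S}=(-1)^{|S|}$ when $j\notin S$ and $\epsilon_{j,S}=(-1)^{|S|-1}$ when $j\in S$. Thus every conjugacy class is contained in $\{z,-z\}$, and $z$ is central precisely when all $\epsilon_{j,S}=1$. Whenever $S$ is neither empty nor full, this would force $|S|$ even (from some $j\notin S$) and $|S|$ odd (from some $j\in S$) simultaneously, which is impossible; and for the full set $S=\{1,\ldots,2k\}$ we have $|S|=2k$ even, so every $j\in S$ gives $\epsilon_{j,S}=(-1)^{2k-1}=-1$, making it non-central as well. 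This is exactly the step that diverges from the odd case: in $\mathcal{G}_{2k-1}$ the full product $\prod_{i}g_i$ is central and spawns a \emph{second} large irreducible, whereas here it is not. Hence the only central elements are $\pm1$, giving two singleton classes, while the remaining $2^{2k+1}-2$ elements split into $2^{2k}-1$ two-element classes $\{z,-z\}$, for a total of $2^{2k}+1$ conjugacy classes, and therefore $2^{2k}+1$ irreducible representations.

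Finally I would pin down the dimensions. Having found $2^{2k}$ one-dimensional representations among a total of $2^{2k}+1$, exactly one irreducible remains; if $d$ is its dimension, the sum-of-squares formula gives $d^2=2^{2k+1}-2^{2k}=2^{2k}$, so $d=2^k$. The main obstacle is the sign bookkeeping for $\epsilon_{j,S}$, and in particular verifying that the full product is non-central for an even number of generators, which is the sole structural difference from $\mathcal{G}_{2k-1}$ and the reason only one high-dimensional irreducible survives here.
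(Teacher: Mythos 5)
Your proposal is correct and follows essentially the same route as the paper's appendix proof: enumerate the conjugacy classes (finding $\pm1$ central and all other classes of the form $\{z,-z\}$, for $2^{2k}+1$ classes in total), exhibit the $2^{2k}$ one-dimensional representations factoring through $\mathcal{G}_{2k}/\{\pm1\}$, and deduce the remaining dimension from the sum-of-squares formula $d^2 = 2^{2k+1}-2^{2k}$. Your unified sign formula $\epsilon_{j,S}$ merely condenses the paper's three-case check that $\prod_{i\in S}g_i$ is non-central, including the full set $S=[2k]$, which is the same structural point the paper makes.
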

\begin{proof} The idea is to apply the counting formula, that, the number of irreducible representations equals the number of conjugacy classes, and
the sum of square of dimensions for each irreducible representations equals the size of the group.

For group  $\mathcal{G}_{2k}$, both elements $1$ and $-1$ form a conjugacy class by itself, since they commute with all other elements. For any $\emptyset \neq S \subseteq [2k]$, the elements $\prod_{i \in S} g_i$ and $-\prod_{i \in S} g_i$ form a conjugacy class. To see this, we discuss by cases. \textbf{Case 1:} $S = [2k]$. It's easily verified $g_1 \left(\prod_{i \in S} g_i\right) g_1^{-1} = - \prod_{i \in S} $, which proves $\prod_{i \in S}$ and $-\prod_{i \in S}$ form
a conjugacy class.   \textbf{Case 2:} $S \subsetneq [2k]$ and $|S|$ is odd. Take $i \in [2k] \setminus S$. It's easily verified $g_i \left(\prod_{i \in S} g_i\right) g_1^{-1} = (-1)^{|S|} \prod_{i \in S} = - \prod_{i \in S}$. \textbf{Case 3:} $S \subsetneq [2k]$ and $|S|$ is even. Take any $i \in S$. It's easily verified $g_i \left(\prod_{i \in S} g_i\right) g_1^{-1} = (-1)^{|S|-1} \prod_{i \in S} = - \prod_{i \in S}$. Hence, we conclude there are $2^{2k}+1$ conjugacy
classes.

For any $S \subset [2k]$, define $\rho : \mathcal{G}_{2k} \to \mathbb{C}$ such that $\rho(g_i) = -1$ if $i \in S$, otherwise $1$, which is an irreducible
 one dimensional representation. Therefore, there exists $2^k$ one dimensional representations. Since there are $2^k+1$ conjugacy classes, there remains only
 one irreducible representations, and the dimension is $\sqrt{|\mathcal{G}_{2k}| - 2^{2k}} = 2^k$ by counting formula.
\end{proof}

\end{document}